\newtheorem{theorem}{Theorem}
\newtheorem{corollary}{Corollary}
\newtheorem{remark}{Remark}
\newtheorem{identity}{Identity}
\DeclareMathOperator{\arcsinh}{arcsinh}
\begin{document}
%
\title{Friendship-based Cooperative Jamming for Secure Communication in Poisson Networks}
%
%
%
%

\author{Yuanyu~Zhang,
        Yulong~Shen,
				Hua Wang
			  and~Xiaohong~Jiang,~\IEEEmembership{Senior~Member,~IEEE}
\IEEEcompsocitemizethanks{\IEEEcompsocthanksitem Y.~Zhang and X.~Jiang are with the School of Systems Information Science,
        Future University Hakodate, 116-2, Kameda Nakano-Cho,
        Hakodate, Hokkaido, 041-8655, Japan, and the School of Computer Science and Technology,
        Xidian University, Shaanxi 710071, China. E-mail:yy90zhang@gmail.com;jiang@fun.ac.jp.
\IEEEcompsocthanksitem Y.~Shen is with the School of Computer Science and Technology, Xidian University, Shaanxi 710071, China. E-mail:ylshen@mail.xidian.edu.cn.
\IEEEcompsocthanksitem H.~Wang is with the	Centre of Applied Informatics, College of Engineering and Science, Victoria University, Australia. Email: hua.wang@vu.edu.au.}}
\IEEEtitleabstractindextext{%
\begin{abstract}
Wireless networks with the consideration of social relationships among network nodes are highly appealing for lots of important data communication services. Ensuring the security of such networks is of great importance to facilitate their applications in supporting future social-based services with strong security guarantee. This paper explores the physical layer security-based secure communication in a finite Poisson network with social friendships among nodes, for which a social friendship-based cooperative jamming scheme is proposed. The jamming scheme consists of a Local Friendship Circle (LFC) and a Long-range Friendship Annulus (LFA), where all legitimate nodes in the LFC serve as jammers, but the legitimate nodes in the LFA are selected as jammers through three location-based policies. To understand both the security and reliability performance of the proposed jamming scheme, we first model the sum interference at any location in the network by deriving its Laplace transform under two typical path loss scenarios. With the help of the interference Laplace transform results, we then derive the exact expression for the transmission outage probability (TOP) and determine both the upper and lower bounds on the secrecy outage probability (SOP), such that the overall outage performances of the proposed jamming scheme can be depicted. Finally, we present extensive numerical results to validate the theoretical analysis of TOP and SOP and also to illustrate the impacts of the friendship-based cooperative jamming on the network performances.
\end{abstract}

\begin{IEEEkeywords}
Poisson networks, social relationship, physical layer security, cooperative jamming.
\end{IEEEkeywords}}

\maketitle

\IEEEdisplaynontitleabstractindextext

%
\IEEEpeerreviewmaketitle

\IEEEraisesectionheading{\section{Introduction}\label{sec_1}}

%
%
%
%
\IEEEPARstart{D}{ue} to the rapid proliferation of smartphones, tablets and PDAs, hand-held devices have been an essential integral part of wireless networks. As these devices are usually carried by human beings, wireless networks, such as mobile ad hoc networks \cite{Kayastha2011}, device-to-device (D2D) communications \cite{YRZhang2015}  and delay-tolerant networks \cite{KMWei2014}, exhibit some social behaviors (e.g., friendship) nowadays. Thus, wireless networks with the consideration of social relationships among network nodes are highly appealing for lots of important data communication services, like content distribution, data sharing and data dissemination \cite{FXian2015}. The inherent open nature of wireless medium makes the information exchange over wireless channels susceptible to eavesdropping attacks from unauthorized users, posing a significant threat to the security of wireless networks \cite{YSShiu2011}. As a result, ensuring the security of such networks is of great importance to facilitate their applications in supporting future social-based services with strong security guarantee, like mobile online social application, location-based application and autonomous mobile application \cite{XHLiang2014}.

The traditional solutions to ensure information security are mainly based on cryptography \cite{Stallings2010}, which encrypts the information with secret keys through various kinds of cryptographic protocols. In cryptography, eavesdroppers are assumed to have limited computing power, so even if they captures the ciphertext, they cannot decrypt it without the secret key. However, as the computing power advances rapidly nowadays, these solutions are facing increasingly high risk of being broken by the relentless attempts of eavesdroppers. In addition, due to the lack of centralized control, secret key management and distribution in decentralized wireless networks are very costly and complex to be implemented. This necessitates the introduction of more powerful schemes to ensure wireless network security. Physical layer (PHY) security \cite{bloch2011physical} has been recognized as a promising strategy to provide a strong form of security for wireless communications. The basic principle of PHY security is to exploit the inherent randomness of noise and wireless channels to ensure the confidentiality of messages against any eavesdropper regardless of its computing power \cite{Mukherjee2014}. Compared to the cryptography-based solutions, PHY security can offer some major advantages, like an everlasting security guarantee, no need for key management/distribution, a high scalability for the next-generation networks \cite{NYang2015}. 

Some recent efforts have been devoted to the study of PHY security-based secure communication in wireless networks with social relationships. Wang \emph{et al.} \cite{wang2015secure} considered a D2D communication scenario, where the head of two D2D user (DUE) clusters wish to communicate with the help of an intermediate Decode-and-Forward relay. The communication security is guaranteed by the cooperative jamming scheme, where multiple friendly jammers send jamming signals to suppress eavesdroppers, and the social relationship is modeled by a social trust parameter $\mu\in[0,1]$. Two sets of jammers (one set per cluster) are selected from DUEs with social trust above some threshold $\mu_{min}$. With the consideration of power constraint, the authors studied the optimal selection of relay and jammers to maximize the secrecy rate of DUE transmission and also to ensure a required signal-to-interference-plus-noise ratio (SINR) level to cellular users. Tang \emph{et al.} \cite{LTang2015} considered a wireless network consisting of one source-destination pair, a set of cooperative jammers and one eavesdropper. Cooperative jamming is adopted to ensure the security and the concept of social tie is introduced to model the social relationship between jammers and the source/destination. The strength of social tie of the $n$-th jammer is denoted by $a_n\in\{0,1\}$, where $1$ ($0$) indicates that the jammer is (is not) willing to participate in the cooperative jamming. The authors modeled the decision problem of jammers as a social tie-based cooperative jamming game and then explored the secrecy outage performance of the source-destination pair by computing the Nash equilibrium of the game.

While the above works represent a significant process in the study of PHY security-based secure communication in wireless networks with social relationships, the social relationships they considered are simply modeled by an indicator variable. Although these variables are acceptable for characterizing some location-independent social relationships, like social tie and social trust, they may fail to model some important social properties closely related to geometric properties of networks, e.g., small-world phenomenon \cite{Kleinberg2000,inaltekin2014delay}. Also, the network scenarios they considered are quite simple, which consists of either only one eavesdropper and several jammers or only two clusters of jammers. To the best of our knowledge, the study of PHY security-based secure communication in more general large scale wireless networks with small-world social relationships still remains unknown, which is the scope of this paper.

This paper considers a finite Poisson network consisting of one transmitter-receiver pair, multiple legitimate nodes and multiple eavesdroppers distributed according to two independent and homogeneous Poisson Point Processes (PPP), respectively. It is notable that the Poisson network model can nicely capture the random geometric properties of networks and enable the analytical modeling of network interference statistics in general \cite{Haenggi2009}, so it has been widely used in the PHY security performance study of large scale wireless networks without the consideration of social relationships \cite{Pinto2012PartI, Pinto2012PartII,Rabbachin2015,Zhou2011,HWang2013,Geraci2014Cellular,ChMa2015, Xu2016,YWLiu2015ICC} (Please refer to Section \ref{sec_6} for related works). In particular, we consider a more realistic location-based friendship model to characterize the small-world social relationships among nodes in the network. The main contributions of this paper are summarized as follows.

\begin{itemize}
\item This paper proposes a friendship-based cooperative jamming scheme to ensure the PHY security-based secure communication between the transmitter and receiver. The jamming scheme comprises a Local Friendship Circle (LFC) and a Long-range Friendship Annulus (LFA), where all legitimate nodes in the LFC serve as jammers, and three location-based policies are designed to select legitimate nodes in the LFA as jammers.
\item The transmission outage probability (TOP) and secrecy outage probability (SOP) are adopted to model the reliability and security performance of the proposed jamming scheme \cite{XYZhou2011}. For the modeling of these performance metrics, we first conduct analysis of the sum interference at any location in the network by deriving its Laplace transforms under the three location-based jammer selection policies and two typical path loss scenarios \cite{Rappaport2001}.

\item With the help of the interference Laplace transform results, we then derive the exact expression for the TOP and determine both the upper and lower bounds on the SOP, such that the overall outage performances of the proposed jamming scheme can be fully depicted. 
\item 
Finally, we present extensive numerical results to validate the theoretical analysis of TOP and SOP and also to illustrate the impacts of the friendship-based cooperative jamming on the network performance.
\end{itemize}

The remainder of this paper is organized as follows. Section \ref{sec_2} introduces the preliminaries and friendship-based cooperative jamming scheme. The Laplace transforms of the sum interference are analyzed in Section \ref{sec_3} and the TOP and SOP are analyzed in Section \ref{sec_4}. The numerical results and corresponding discussions are provided in Section \ref{sec_5}. Section \ref{sec_6} presents the related works of PHY security performance study for Poisson networks without social relationships. Finally, we conclude this paper in Section \ref{sec_7}.

\section{Preliminaries and Jamming Scheme} \label{sec_2} 

\subsection{System Model}
\begin{figure}[!t]
\centering
\includegraphics[width=2.5in]{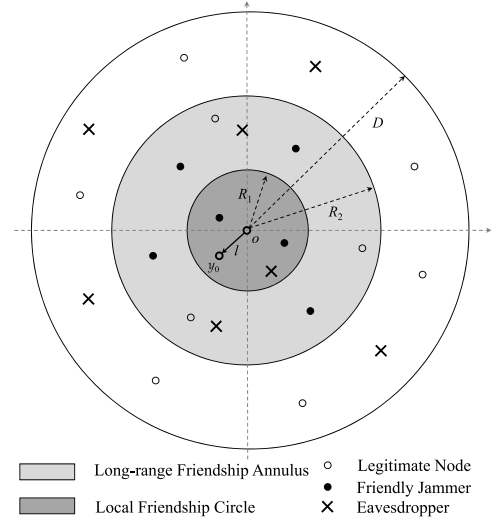}
\caption{System model: nodes are distributed over a bi-dimensional disk $\mathcal B(o,D)$ with radius $D$. The transmitter is located at the origin $o$ and the receiver is located at $y_0$ with $||y_0||=l$. Legitimate nodes and eavesdroppers are distributed according to two independent homogeneous PPPs. The friendship-based cooperative jamming model comprises a LFC with radius $R_1$ and a LFA with inner radius $R_1$ and outer radius $R_2$.}
\label{fig_sysmodel}
\end{figure}

As illustrated in Fig.\ref{fig_sysmodel}, we consider a finite wireless network with nodes distributed over a bi-dimensional disk $\mathcal B(o,D)\subset \mathbb R^2$ with radius $D$. The network consists of a transmitter located at the origin $o$ and a receiver located at $y_0$ with fixed distance $||y_0||=l$ to $o$. Also present in the network are multiple legitimate nodes and multiple eavesdroppers, whose locations are modeled as two independent and homogeneous PPPs $\Phi$ and $\Phi_E$ with intensities $\lambda$ and $\lambda_e$, respectively. Throughout this paper we will use $x$ ($z$) to denote the random location of a legitimate node (eavesdropper) as well as the node (eavesdropper) itself. To suppress the eavesdroppers, a set of legitimate nodes will serve as jammers to send jamming signals. The set of jammer locations is denoted by $\Phi_J$.

The channel suffers from both small-scale Rayleigh fading and large-scale log-distance path loss with exponent $\alpha\geq 2$ \cite{Rappaport2001}. The fading coefficient is constant for a block of transmission and varies randomly and independently from block to block for all channels. We assume that the transmitter and jammers transmit with the same power. Without loss of generality, unit transmit power is assumed. The sum interference caused by the set of jammers at any location $y$ in the network is then given by 
\begin{IEEEeqnarray}{rCl}
I(y)=\sum_{x\in\Phi_J}h_{x,y}||x-y||^{-\alpha},
\end{IEEEeqnarray}
where $h_{x,y}$ is the fading coefficient between $x$ and $y$, and $||x-y||$ is the distance between $x$ and $y$. Due to the Rayleigh fading assumption, $h_{x,y}$ is exponentially distributed. We assume unit mean for $h_{x,y}$, i.e., $\mathbb E[h_{x,y}]=1$. The network is assumed interference-limited, and hence, the ambient noise is negligible. The signal-to-interference ratio (SIR) for the receiver $y_0$ from the transmitter $o$ is then given by
\begin{IEEEeqnarray}{rCl}
\mathrm{SIR}_{y_0} = \frac{h_{o,y_0}l^{-\alpha}}{I(y_0)},
\end{IEEEeqnarray}
and the SIR for any eavesdropper $z\in \Phi_E$ is given by
\begin{IEEEeqnarray}{rCl}
\mathrm{SIR}_{z} = \frac{h_{o,z}||z||^{-\alpha}}{I(z)}.
\end{IEEEeqnarray}

\subsection{Friendship-based Cooperative Jamming}\label{sec_2_2}

To ensure the transmission security, this paper proposes a friendship-based cooperative jamming scheme by exploiting the inherent friendship between the transmitter and legitimate nodes. In this scheme, only the legitimate nodes that are friends of 
the transmitter serve as jammers. It was demonstrated in \cite{inaltekin2014delay} that each node has not only local friends in a circle around itself but also $N$ long-range friends randomly selected from the region outside the local circle. It is notable that $N$ can be drawn from any given discrete probability distribution. 

Based on the model in \cite{inaltekin2014delay}, the proposed jamming scheme is composed of a Local Friendship Circle (LFC) with radius $R_1$ and a Long-range Friendship Annulus (LFA) with inner radius $R_1$ and outer radius $R_2$, where $0<R_1\leq R_2\leq D$ (illustrated in Fig.\ref{fig_sysmodel}). Both the LFC and LFA are centered at the transmitter (i.e., the origin $o$). Let $\mathcal A_1$ denote the LFC and $\mathcal A_2$ denote the LFA. In the proposed jamming scheme, all legitimate nodes in $\mathcal A_1$ serve as jammers, while each legitimate node $x$ in $\mathcal A_2$ is selected as a jammer through a location-based policy $P(||x||)\in[0,1]$. Notice that different $P(||x||)$ can yield different distributions of long-range jammers (i.e., different $\Phi_J$). In this paper, we design three selection policies $P(||x||)$, which are summarized as follows.

\begin{itemize}
\item \textbf{Policy E}: For each node $x\in\Phi\cap \mathcal A_2$, $P(||x||)=p$, where $p\in[0,1]$. This policy corresponds to the scenario where long-range jammers are uniformly distributed over $\mathcal A_2$.
\item \textbf{Policy I}: For each node $x\in\Phi\cap \mathcal A_2$, $P(||x||)$ is increasing with its path loss to the transmitter, i.e., 
\begin{IEEEeqnarray}{rCl}\label{eqn_policyI}
P(||x||)=\frac{||x||^\alpha-R_1^{\alpha}}{R_2^\alpha-R_1^\alpha}.
\end{IEEEeqnarray}
 This policy corresponds to the scenario where most of the long-range jammers are distributed near $R_2$.
\item \textbf{Policy D}: For each node $x\in\Phi\cap \mathcal A_2$, $P(||x||)$ is decreasing with its path loss to the transmitter, i.e., 
\begin{IEEEeqnarray}{rCl}\label{eqn_policyD}
P(||x||)=\frac{R_2^{\alpha}-||x||^\alpha}{R_2^\alpha-R_1^\alpha}.
\end{IEEEeqnarray}
This policy corresponds to the scenario where most of the long-range jammers are distributed near $R_1$.
\end{itemize}  

\begin{remark}
The policy $P(||x||)$ can be interpreted as a thinning operation on $\Phi$ \cite{Chiu2013}. According to the property of thinning operation, the number of jammers in $\mathcal A_2$ still follows a Poisson distribution. Hence, the friendship model in the proposed jamming scheme is a special case of the one in \cite{inaltekin2014delay}, given that $N$ is drawn from a Poisson distribution. Also, from (\ref{eqn_policyI}) and (\ref{eqn_policyD}), we can see that Policy $\mathrm{D}$ generates more long-range jammers than Policy $\mathrm{I}$.
\end{remark}
\subsection{Performance Metrics}
The impact of friendship-based cooperative jamming scheme on the communication between the transmitter $o$ and receiver $y_0$ is two-edged. On one hand, the interference generated by the jammers can degrade the eavesdropper channels, which may greatly enhance the security of the communication. On the other hand, the transmitter-receiver link is also impaired by the unintended interference, resulting in a probably unreliable communication. In this paper, we will adopt the concepts of \emph{transmission outage probability} (TOP) and \emph{secrecy outage probability} (SOP) to measure the reliability and security of the transmitter-receiver communication \cite{XYZhou2011}, which can be defined according to  the following outage events.

\begin{itemize}
\item \textbf{Transmission outage}: The SIR at the receiver $y_0$ is below some threshold $\beta$, i.e., $\mathrm{SIR}_{y_0}<\beta$, which results in that the receiver $y_0$ fails to decode the message from the transmitter $o$. The probability that this event happens is referred to as the TOP. 
\item \textbf{Secrecy outage}: The  SIR at one or more eavesdroppers is above some threshold $\beta_e$, which results in that the eavesdroppers can intercept the message from the transmitter $o$. The probability that this event happens is referred to as the SOP.
\end{itemize}
Formally, the TOP is given by
\begin{IEEEeqnarray}{rCl}\label{eqn_df_top}
p_{to}=\mathbb P(\mathrm{SIR}_{y_0}<\beta),
\end{IEEEeqnarray}
and the SOP is given by
\begin{IEEEeqnarray}{rCl}\label{eqn_df_sop}
p_{so}=\mathbb P\left(\bigcup_{z\in\Phi_E}\mathrm{SIR}_{y_0}>\beta_e\right).
\end{IEEEeqnarray}

\section{Laplace Transform of the Sum Interference}\label{sec_3}

In this section, the Laplace transform of the sum interference $I(y)$ at any location $y\in\mathcal B(o,D)$ is analyzed for all three long-range jammer selection policies. To make the analysis mathematically tractable, we focus on two typical path loss scenarios of $\alpha=2$ and $\alpha=4$.

According to the definition, the Laplace transform of $I(y)$ is given by  
\begin{IEEEeqnarray}{rCl}
\mathcal L_{I(y)}^{\Xi,\alpha}(s)&=&\mathbb E_{I(y)}\left[e^{-sI(y)}\right]\nonumber\\
&=&\mathbb E_{\Phi_J,\left\{h_{x,y}\right\}}\left[\mathrm{exp}\left(-s\sum_{x\in\Phi_J}h_{x,y}||x-y||^{-\alpha}\right)\right]\nonumber\\
&=&\mathbb E_{\Phi_J,\left\{h_{x,y}\right\}}\left[\prod_{x\in\Phi_J}\mathrm{exp}\left(-sh_{x,y}||x-y||^{-\alpha}\right)\right]\nonumber\\
&=&\mathbb E_{\Phi_J}\left[\prod_{x\in\Phi_J}\mathbb E_{h}\left[\mathrm{exp}\left(-sh||x-y||^{-\alpha}\right)\right]\right]\nonumber\\
&=&\mathbb E_{\Phi_J}\left[\prod_{x\in\Phi_J}\frac{1}{1+s||x-y||^{-\alpha}}\right],
\end{IEEEeqnarray}
where $\Xi=\mathrm{E, I, D}$ denotes the selection policy.

From the cooperative jamming scheme in Section \ref{sec_2_2}, we can see that $\Phi_J$ is indeed an inhomogeneous PPP obtained by applying two independent thinning operations on $\Phi$.
We now define the intensity measure of $\Phi_J$ by $\mathit{\Lambda}(\cdot)$, which gives the expected number of nodes in a given set. By applying the probability generating functional of $\Phi_J$, we have

\begin{IEEEeqnarray}{rCl}\label{eqn_LT_general}
\mathcal L_{I(y)}^{\Xi,\alpha}(s)&=&\mathrm{exp}\left\{-\int_{\mathcal B(o,D)}\left(1-\frac{1}{1+s||x-y||^{-\alpha}}\right)\mathit{\Lambda}(\mathrm{d}x)\right\}\nonumber \\
&=&\mathrm{exp}\left\{-\underbrace{\int_{\mathcal B(o,D)}\left(\frac{s}{s+||x-y||^{\alpha}}\right)\mathit{\Lambda}(\mathrm{d}x)}_A\right\},
\end{IEEEeqnarray}
where $\mathit{\Lambda}(\mathrm{d}x)$ is given by
\begin{IEEEeqnarray}{rCl}
\mathit{\Lambda}(\mathrm{d}x)=\left\{\begin{matrix}
\lambda \mathrm{d}x, & x\in\mathcal A_1\\ 
\lambda P(||x||)\mathrm{d}x, & x\in\mathcal A_2.
\end{matrix}\right.,
\end{IEEEeqnarray}
following from the thinning property of PPP. The term $A$ in (\ref{eqn_LT_general}) can be rewritten as 
\begin{IEEEeqnarray}{rCl}\label{eqn_A}
A&=&\lambda\underbrace{\int_{\mathcal A_1}\left(\frac{s}{s+||x-y||^{\alpha}}\right)\mathrm{d}x}_{B_\alpha}\nonumber\\
&&+\lambda\underbrace{\int_{\mathcal A_2}\left(\frac{s}{s+||x-y||^{\alpha}}\right)P(||x||)\mathrm{d}x}_{C_\alpha }.
\end{IEEEeqnarray}
Changing Cartesian coordinates to polar coordinates, we can rewrite $B_\alpha$ and $C_\alpha$ in (\ref{eqn_A}) as

\begin{IEEEeqnarray}{rCl}\label{eqn_B_alpha}
B_\alpha&=&2\int_{0}^{R_1}\int_{0}^{\pi}\frac{sr\mathrm{d}\theta\mathrm{d}r}{s+(r^2+||y||^2-2r||y||\cos\theta)^{\alpha/2}},
\end{IEEEeqnarray}
and
\begin{IEEEeqnarray}{rCl}\label{eqn_C_alpha}
C_\alpha&=&2\int_{R_1}^{R_2}\int_{0}^{\pi}\frac{srP(r)\mathrm{d}\theta\mathrm{d}r}{s+(r^2+||y||^2-2r||y||\cos\theta)^{\alpha/2}}.
\end{IEEEeqnarray}

\subsection{ The Case of $\alpha = 2$} 

In this subsection, we derive the Laplace transform of $I(y)$ for the case of $\alpha=2$.  The main results are summarized in the following theorem.

\begin{theorem}\label{theorem_LT_2}
For the case of $\alpha=2$, the Laplace transform of the sum interference $I(y)$ at any location $y\in \mathcal B(o,D)$ under Policy $\mathrm{E}$  is given by 
\begin{IEEEeqnarray}{rCl}\label{eqn_LT_E2}
\mathcal L_{I(y)}^{\mathrm{E},2}(s)&=&\mathrm{exp}\Bigg\{-\lambda\pi s\bigg[p \arcsinh \frac{s+R_2^2-||y||^2}{2||y||\sqrt{s}}\\
&&+(1-p)\arcsinh \frac{s+R_1^2-||y||^2}{2||y||\sqrt{s}}-\ln\frac{\sqrt{s}}{||y||}\bigg]\Bigg\},\nonumber
\end{IEEEeqnarray}
where $\arcsinh t=\mathrm{ln}(t+\sqrt{t^2+1})$ denotes the inverse hyperbolic sine function.
The Laplace transform of $I(y)$ under Policy $\mathrm{I}$ and Policy $\mathrm{D}$ is given by
\begin{IEEEeqnarray}{rCl}\label{eqn_LT_I_D2}
\mathcal L_{I(y)}^{\Xi',2}(s)\nonumber&=&\mathrm{exp}\Bigg\{-\lambda\pi s\bigg[\Psi_2^{\Xi'}(R_2,s,||y||)-\Psi_{2}^{\Xi'}(R_1,s,||y||)\nonumber\\
&&+\bigg(\arcsinh \frac{s+R_1^2-||y||^2}{2||y||\sqrt{s}}-\ln\frac{\sqrt{s}}{||y||}\bigg)\bigg]\Bigg\},
\end{IEEEeqnarray}
where
$\Xi'=\mathrm{I}$ and $\mathrm{D}$, 
\begin{IEEEeqnarray}{rCl}
\Psi_2^{\mathrm{I}}(r,s,||y||)&=&\frac{ \sqrt{(r^4+2(s-||y||^2)r^2+(s+||y||^2)^2}}{R_2^2-R_1^2}\nonumber\\
&&-\frac{s+R_1^2-||y||^2}{R_2^2-R_1^2}\arcsinh \frac{s+r^2-||y||^2}{2||y||\sqrt{s}},\nonumber
\end{IEEEeqnarray}
and 
\begin{IEEEeqnarray}{rCl}
\Psi_2^{\mathrm{D}}(r,s,||y||)&=&\frac{s+R_2^2-||y||^2}{R_2^2-R_1^2}\arcsinh \frac{s+r^2-||y||^2}{2||y||\sqrt{s}}\nonumber\\
&&-\frac{\sqrt{(r^4+2(s-||y||^2)r^2+(s+||y||^2)^2}}{R_2^2-R_1^2}.\nonumber 
\end{IEEEeqnarray}
\end{theorem}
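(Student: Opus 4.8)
The plan is to start from the probability generating functional expression (\ref{eqn_LT_general})–(\ref{eqn_C_alpha}), which already reduces the problem to evaluating the two double integrals $B_2$ and $C_2$ and then assembling $A = \lambda(B_2 + C_2)$ as in (\ref{eqn_A}). For $\alpha = 2$ the denominators in (\ref{eqn_B_alpha}) and (\ref{eqn_C_alpha}) become $s + r^2 + ||y||^2 - 2r||y||\cos\theta$, which is affine in $\cos\theta$, so I would perform the angular integration first. Using the standard identity $\int_0^\pi \frac{d\theta}{a - b\cos\theta} = \frac{\pi}{\sqrt{a^2 - b^2}}$ (valid here since $a = s + r^2 + ||y||^2 > 2r||y|| = |b|$ because $s + (r - ||y||)^2 > 0$ for all $s>0$), the inner integral collapses to $\frac{\pi}{\sqrt{r^4 + 2(s - ||y||^2)r^2 + (s + ||y||^2)^2}}$, which is precisely the radical appearing in $\Psi_2^{\mathrm{I}}$ and $\Psi_2^{\mathrm{D}}$.

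Next I would carry out the radial integration. After the substitution $v = r^2$ the remaining integrals reduce to $\int \frac{dv}{\sqrt{v^2 + 2(s - ||y||^2)v + (s + ||y||^2)^2}}$ and, for the non-constant policies, $\int \frac{v\,dv}{\sqrt{\cdots}}$. Completing the square inside the radical yields $(v + s - ||y||^2)^2 + 4s||y||^2$, so the pure integral is an inverse hyperbolic sine, $\arcsinh\frac{v + s - ||y||^2}{2||y||\sqrt{s}}$. A short computation at the lower endpoint $v = 0$ shows $\arcsinh\frac{s - ||y||^2}{2||y||\sqrt{s}} = \ln\frac{\sqrt{s}}{||y||}$, since with $t = \frac{s - ||y||^2}{2||y||\sqrt{s}}$ one has $t + \sqrt{t^2 + 1} = \frac{\sqrt{s}}{||y||}$; this produces the $-\ln\frac{\sqrt{s}}{||y||}$ term common to all three formulas. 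This settles $B_2$, and Policy E follows at once since there $P(r) \equiv p$ is constant, so $C_2$ is just $p$ times the same $\arcsinh$ antiderivative evaluated between $R_1$ and $R_2$, giving (\ref{eqn_LT_E2}).

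For Policies I and D I would substitute $P(r) = \frac{r^2 - R_1^2}{R_2^2 - R_1^2}$ and $\frac{R_2^2 - r^2}{R_2^2 - R_1^2}$ respectively (recalling $||x||^\alpha = r^2$ when $\alpha=2$), turning $C_2$ into a linear combination of the two radial integrals above. The first-moment integral is handled by $\int \frac{v\,dv}{\sqrt{v^2 + bv + c}} = \sqrt{v^2 + bv + c} - \frac{b}{2}\int \frac{dv}{\sqrt{v^2 + bv + c}}$ with $b = 2(s - ||y||^2)$; combining it with the $R_1^2$ (resp. $R_2^2$) multiple of the pure integral collapses the coefficients into the factor $s + R_1^2 - ||y||^2$ (resp. $s + R_2^2 - ||y||^2$), which is exactly how $\Psi_2^{\mathrm{I}}$ and $\Psi_2^{\mathrm{D}}$ are defined. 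Writing $C_2 = \pi s\bigl[\Psi_2^{\Xi'}(R_2, s, ||y||) - \Psi_2^{\Xi'}(R_1, s, ||y||)\bigr]$ and adding the $B_2$ contribution yields (\ref{eqn_LT_I_D2}).

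The genuinely delicate points here are bookkeeping rather than conceptual obstacles. The lower-endpoint evaluation where the $\arcsinh$ collapses to a logarithm is the step most prone to sign errors and relies on $||y|| > 0$; the degenerate case $y = o$ must be excluded or treated as a separate limit, which is harmless since that event occurs with probability zero. I also have to confirm the validity condition $a > |b|$ of the angular identity uniformly over $[0, R_2]$, which as noted holds for every $s > 0$, and to verify throughout that all radicals are taken with the positive root so that the antiderivatives match the stated closed forms.
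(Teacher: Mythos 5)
Your proposal is correct and follows essentially the same route as the paper's Appendix B: angular integration via the identity $\int_0^\pi \frac{\mathrm{d}\theta}{a+b\cos\theta}=\frac{\pi}{\sqrt{a^2-b^2}}$, the substitution $t=r^2$, the $\arcsinh$ antiderivative for the pure integral (with the lower endpoint collapsing to $\ln\frac{\sqrt{s}}{||y||}$), and the standard first-moment reduction to handle the linear-in-$r^2$ policies $\mathrm{I}$ and $\mathrm{D}$. The only differences are presentational — you complete the square and verify the endpoint logarithm explicitly where the paper cites tabulated identities — and your coefficient bookkeeping reproduces $\Psi_2^{\mathrm{I}}$ and $\Psi_2^{\mathrm{D}}$ exactly.
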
 

\begin{proof}
The proof is given in Appendix \ref{sec_app2}.
\end{proof}

\subsection{The Case of $\alpha=4$}
The Laplace transform of $I(y)$ for the case of $\alpha=4$ is derived in this subsection. The main results are summarized in the following theorem. 
\begin{theorem}\label{theorem_LT_4}
For the case of $\alpha=4$, the Laplace transform of the sum interference $I(y)$ at any location $y\in \mathcal B(o,D)$ under Policy $\mathrm{E}$  is given by
\begin{IEEEeqnarray}{rCl}\label{eqn_LT_E4}
\mathcal L_{I(y)}^{\mathrm{E},4}(s)&=&\mathrm{exp}\Bigg\{-\lambda\pi\sqrt{s}\bigg[\frac{\pi}{2}-(1-p)\\
&&\times\arctan\frac{\sqrt{s}+\psi(R_1,s,||y||)}{\eta(R_1,s,||y||)+R_1^2-||y||^2}\nonumber\\
&&-p\arctan\frac{\sqrt{s}+\psi(R_2,s,||y||)}{\eta(R_2,s,||y||)+R_2^2-||y||^2}\bigg]\Bigg\},\nonumber
\end{IEEEeqnarray}
where \begin{IEEEeqnarray}{rCl}\label{eqn_eta}
&&\eta(r,s,||y||)\\
&&=\frac{\sqrt{\sqrt{(g(r,s,||y||))^2+4s(r^2+||y||^2)^2}+g(r,s,||y||)}}{\sqrt{2}},\nonumber
\end{IEEEeqnarray}  
\begin{IEEEeqnarray}{rCl}
g(r,s,||y||)=(r^2-||y||^2)^2-s,
\end{IEEEeqnarray} 
\begin{IEEEeqnarray}{rCl}\label{eqn_psi}
\psi(r,s,||y||)&=&\frac{\sqrt{s}(r^2+||y||^2)}{\eta(r,s,||y||)},
\end{IEEEeqnarray} 
and $\arctan t$ is the inverse tangent function. The Laplace transform of $I(y)$ under Policy $\mathrm{I}$ and Policy $\mathrm{D}$ is given by
\begin{IEEEeqnarray}{rCl}\label{eqn_LT_I_D4}
\mathcal L_{I(y)}^{\Xi',4}(s)&=&\mathrm{exp}\Bigg\{-\lambda\pi\sqrt{s}\nonumber\\
&&\times\bigg[\frac{\pi}{2}-\arctan\frac{\sqrt{s}+\psi(R_1,s,||y||)}{\eta(R_1,s,||y||)+R_1^2-||y||^2}\nonumber\\
&&+\Psi_{4}^{\Xi'}(R_2,s,||y||)-\Psi_{4}^{\Xi'}(R_1,s,||y||)\bigg]\Bigg\},
\end{IEEEeqnarray}
where $\Xi'=\mathrm{I}$ and $\mathrm{D}$, 
\begin{IEEEeqnarray}{rCl}
\Psi_{4}^{\mathrm{I}}(r,s,||y||)&=&\frac{2\sqrt{s}||y||^2}{R_2^4-R_1^4}\ln\bigg[(\eta(r,s,||y||)+r^2-||y||^2)^2\nonumber\\
&&+(\sqrt{s}+\psi(r,s,||y||))^2\bigg]-\frac{1}{2(R_2^4-R_1^4)}\nonumber\\
&&\times\bigg[(r^2+3||y||^2)\psi(r,s,||y||)\nonumber\\
&&-3\sqrt{s}\eta(r,s,||y||)\bigg]+\frac{s+R_1^4-||y||^4}{R_2^4-R_1^4}\nonumber\\
&&\times\arctan\frac{\sqrt{s}+\psi(r,s,||y||)}{\eta(r,s,||y||)+r^2-||y||^2},
\end{IEEEeqnarray}
and
\begin{IEEEeqnarray}{rCl}
\Psi_{4}^{\mathrm{D}}(r,s,||y||)&=&-\frac{2\sqrt{s}||y||^2}{R_2^4-R_1^4}\ln\bigg[(\eta(r,s,||y||)+r^2-||y||^2)^2\nonumber\\
&&+(\sqrt{s}+\psi(r,s,||y||))^2\bigg]+\frac{1}{2(R_2^4-R_1^4)}\nonumber\\
&&\times\bigg[(r^2+3||y||^2)\psi(r,s,||y||)\nonumber\\
&&-3\sqrt{s}\eta(r,s,||y||)\bigg]-\frac{s+R_2^4-||y||^4}{R_2^4-R_1^4}\nonumber\\
&&\times\arctan\frac{\sqrt{s}+\psi(r,s,||y||)}{\eta(r,s,||y||)+r^2-||y||^2}.
\end{IEEEeqnarray}
\end{theorem}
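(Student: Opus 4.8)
The plan is to start from the general expression (\ref{eqn_LT_general}) for $\mathcal L_{I(y)}^{\Xi,4}(s)$, so that the whole task reduces to evaluating the two double integrals $B_4$ and $C_4$ defined in (\ref{eqn_A})--(\ref{eqn_C_alpha}) with $\alpha=4$. In each of these the denominator is $s+(r^2+||y||^2-2r||y||\cos\theta)^2$, so the natural first move is to carry out the inner $\theta$-integral for fixed $r$ and only afterwards integrate over $r$. I would handle all three policies simultaneously at the angular stage, since the policy factor $P(r)$ does not depend on $\theta$ and can be pulled outside the inner integral.

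For the angular integral, the key idea is to factor the quartic denominator into a product of two complex-conjugate factors that are linear in $\cos\theta$, namely $s+u^2=(u-i\sqrt{s})(u+i\sqrt{s})$ with $u=r^2+||y||^2-2r||y||\cos\theta$. A partial-fraction decomposition then splits $1/(s+u^2)$ into two terms, each of the standard form $1/(c-b\cos\theta)$ with $c=r^2+||y||^2\mp i\sqrt{s}$ and $b=2r||y||$. Applying the elementary formula $\int_0^\pi \frac{d\theta}{c-b\cos\theta}=\frac{\pi}{\sqrt{c^2-b^2}}$ to each factor and taking the principal branch of the square root, one finds $\sqrt{c^2-b^2}=\eta\mp i\psi$, where $c^2-b^2=g\mp 2i(r^2+||y||^2)\sqrt{s}$ and $g$, $\eta$, $\psi$ are exactly the quantities defined in (\ref{eqn_eta})--(\ref{eqn_psi}); indeed, matching real and imaginary parts of $(\eta-i\psi)^2=c^2-b^2$ reproduces $\eta\psi=(r^2+||y||^2)\sqrt{s}$ and $\eta^2-\psi^2=g$, which is precisely how $\eta$ is forced to solve the biquadratic in (\ref{eqn_eta}). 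Recombining the conjugate pair yields the purely real angular result $\int_0^\pi \frac{d\theta}{s+u^2}=\frac{\pi\psi}{\sqrt{s}(\eta^2+\psi^2)}$, with $\eta^2+\psi^2=\sqrt{g^2+4s(r^2+||y||^2)^2}$.

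It then remains to perform the radial integral. For Policy $\mathrm{E}$ the factor $P(r)$ equals the constant $p$ throughout $\mathcal A_2$, so $C_4$ becomes $p$ times an integral of $\frac{r\psi}{\eta^2+\psi^2}$ over $[R_1,R_2]$, of exactly the same form as $B_4$ over $[0,R_1]$. A substitution expressing the integrand through $\eta+r^2-||y||^2$ and $\sqrt{s}+\psi$ turns each such integral into an (inverse tangent) $\arctan$ term, and evaluating at the lower limit $r=0$ (where one checks that $\eta=||y||^2$ and $\psi=\sqrt{s}$, so the $\arctan$ argument tends to $+\infty$ and the $\arctan$ equals $\tfrac{\pi}{2}$) produces exactly the $\tfrac{\pi}{2}$ and the boundary $\arctan$ terms in (\ref{eqn_LT_E4}). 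For Policies $\mathrm{I}$ and $\mathrm{D}$ the policy factor (\ref{eqn_policyI})--(\ref{eqn_policyD}) contributes an additional $r^4$ in the radial integrand, weighted by $1/(R_2^4-R_1^4)$; integrating $\frac{r^5\psi}{\eta^2+\psi^2}$ together with the $\frac{R_1^4\, r\psi}{\eta^2+\psi^2}$ (or $R_2^4$) remainder is what generates the logarithmic term and the polynomial combination $(r^2+3||y||^2)\psi-3\sqrt{s}\,\eta$ appearing in $\Psi_{4}^{\mathrm{I}}$ and $\Psi_{4}^{\mathrm{D}}$, while the surviving $\arctan$ carries the coefficient $\frac{s+R_1^4-||y||^4}{R_2^4-R_1^4}$ (respectively with $R_2^4$), the signs and the $R_1^4$ versus $R_2^4$ offsets tracking the increasing versus decreasing form of $P(r)$.

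The main obstacle I anticipate is this final radial integration for Policies $\mathrm{I}$ and $\mathrm{D}$: because $\eta$ and $\psi$ are nested radicals in $r$, the antiderivative of $\frac{r^5\psi}{\eta^2+\psi^2}$ is not obtained by a single clean substitution, and one must combine a logarithmic piece, an inverse-tangent piece, and an algebraic piece so that their derivatives telescope back to the integrand. The cleanest way to close the argument is to differentiate the claimed $\Psi_{4}^{\mathrm{I}}$ and $\Psi_{4}^{\mathrm{D}}$ and verify that the $\eta$- and $\psi$-derivatives recombine correctly, using the two identities $\eta\psi=(r^2+||y||^2)\sqrt{s}$ and $\eta^2+\psi^2=\sqrt{g^2+4s(r^2+||y||^2)^2}$ established at the angular stage. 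Confirming this cancellation is the technically heaviest part, and I would treat it as the crux of the proof.
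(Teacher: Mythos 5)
Your proposal is correct and follows the paper's own route in all essential respects: you split the quartic denominator into the complex-conjugate factors $u\mp i\sqrt{s}$, integrate over $\theta$ with the standard formula $\int_0^\pi d\theta/(a+b\cos\theta)=\pi/\sqrt{a^2-b^2}$ (the paper's Identity~1 with $n=0$), identify $\eta$ and $\psi$ from $(\eta-i\psi)^2=g-2i\sqrt{s}(r^2+||y||^2)$ exactly as the paper does, obtain the $\arctan$ by recombining the conjugate pair, and get the $\pi/2$ from the $r\to 0$ limit (your values $\eta\to||y||^2$, $\psi\to\sqrt{s}$ are in fact the correct ones; the paper's appendix writes $\sqrt{2s}$ for that limit, which is immaterial since only the vanishing denominator matters). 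The one genuine tactical difference is in the radial step: you recombine the conjugates \emph{before} integrating over $r$, which leaves you with the real but nested-radical integrand $r P(r)\psi/(\eta^2+\psi^2)$ and forces you to guess-and-verify the antiderivatives $\Psi_4^{\mathrm I}$, $\Psi_4^{\mathrm D}$ by differentiation; the paper instead keeps the two complex terms $\int 2rP(r)\,dr/\sqrt{\mathcal C_1}$ and its conjugate separate, substitutes $t=r^2$ so that $\mathcal C_{1,2}$ become quadratics in $t$, and applies the tabulated identities $\int dt/\sqrt{Q}$ and the reduction formula $\int t^m dt/\sqrt{Q^{2n+1}}$ (Identities~2 and~3) to complex quadratics, recombining conjugates into the real $\ln$, $\arctan$ and algebraic pieces only at the very end. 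The paper's ordering buys a constructive derivation of the $\Psi_4$ terms from standard tables, whereas your ordering keeps everything real at the cost of a heavier verification; both close the argument, and your plan correctly identifies that verification (using $\eta\psi=\sqrt{s}(r^2+||y||^2)$ and $\eta^2+\psi^2=\sqrt{g^2+4s(r^2+||y||^2)^2}$) as the crux that must actually be carried out.
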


\begin{proof}
The proof is given in Appendix \ref{sec_app3}.
\end{proof}

\begin{corollary}
For $P(r)=0$, as $R_1\rightarrow \infty$, the Laplace transform of $I(y)$ for the case of $\alpha=4$ is
\begin{IEEEeqnarray}{rCl}
\mathcal L_{I(y)}^{\Xi, 4}(s)&=& \mathrm{exp}\left(- \frac{\lambda\sqrt{s}\pi^2}{2}\right),
\end{IEEEeqnarray}
which recovers the well-known Laplace transform of $I(y)$ for a homogeneous infinite PPP with $\alpha=4$ \cite{Haenggi2009}.
\end{corollary}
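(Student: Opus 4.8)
The plan is to specialize Theorem \ref{theorem_LT_4} to the degenerate case $P(r)=0$ and then carry out an asymptotic analysis as $R_1\to\infty$. First I would observe that setting $P(r)=0$ removes every jammer from the annulus $\mathcal A_2$, so that $\Phi_J$ collapses to the homogeneous PPP of intensity $\lambda$ on the disk $\mathcal B(o,R_1)$; equivalently, this is Policy $\mathrm{E}$ with $p=0$, in which case all three policy-dependent expressions coincide. Substituting $p=0$ into (\ref{eqn_LT_E4}) therefore gives
$$\mathcal L_{I(y)}^{\Xi,4}(s)=\mathrm{exp}\left\{-\lambda\pi\sqrt{s}\left[\frac{\pi}{2}-\arctan\frac{\sqrt{s}+\psi(R_1,s,||y||)}{\eta(R_1,s,||y||)+R_1^2-||y||^2}\right]\right\},$$
so it remains to show that the $\arctan$ term vanishes in the limit $R_1\to\infty$.

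The core of the argument is the large-$R_1$ behavior of the auxiliary functions $\eta$ and $\psi$. From (\ref{eqn_eta}), the quantity $g(R_1,s,||y||)=(R_1^2-||y||^2)^2-s$ grows like $R_1^4$, so the inner radical $\sqrt{g^2+4s(R_1^2+||y||^2)^2}$ is dominated by $g^2$ and is itself asymptotic to $R_1^4$. Hence $\sqrt{g^2+4s(R_1^2+||y||^2)^2}+g\sim 2R_1^4$, which yields $\eta(R_1,s,||y||)\sim R_1^2$. Feeding this into (\ref{eqn_psi}) gives $\psi(R_1,s,||y||)=\sqrt{s}(R_1^2+||y||^2)/\eta\to\sqrt{s}$, a finite limit.

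With these asymptotics in hand the conclusion is immediate. The numerator of the $\arctan$ argument tends to the finite value $2\sqrt{s}$, while its denominator $\eta(R_1,s,||y||)+R_1^2-||y||^2\sim 2R_1^2$ diverges, so the whole argument tends to $0$ and $\arctan(\cdot)\to 0$. The exponent then reduces to $-\lambda\pi\sqrt{s}\cdot\tfrac{\pi}{2}=-\lambda\sqrt{s}\pi^2/2$, giving the claimed Laplace transform, which matches the classical result for a homogeneous infinite PPP with $\alpha=4$ in \cite{Haenggi2009}.

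I expect the only delicate point to be the handling of the nested radical in $\eta$: one must verify that the lower-order term $4s(R_1^2+||y||^2)^2$ is genuinely subdominant to $g^2$, so that $\eta\sim R_1^2$ rather than some other power of $R_1$. Keeping one correction term in the expansion of $\sqrt{1+4s/R_1^4}$ suffices to confirm that $\eta=R_1^2+O(1)$ and that the bounded numerator contributes nothing in the limit; the remainder of the verification is routine.
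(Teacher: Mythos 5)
Your proposal is correct and follows essentially the same route as the paper: set $P(r)=0$ so only the $B_4$ (LFC) contribution survives, then show the $\arctan$ term vanishes as $R_1\to\infty$ because its numerator tends to the finite value $2\sqrt{s}$ (via $\psi\to\sqrt{s}$) while its denominator $\eta+R_1^2-||y||^2$ diverges. In fact your treatment of the nested radical, establishing $\eta\sim R_1^2$ before concluding, is more careful than the paper's, which simply writes $\arctan\bigl(2\sqrt{s}/(\infty-||y||^2)\bigr)=0$ without justifying the asymptotics of $\eta$ and $\psi$.
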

\begin{proof}
Letting $P(r)=0$ yields 
\begin{IEEEeqnarray}{rCl}
\mathcal L_{I(y)}^{\Xi,4}(s)&=&\mathrm{exp}\Bigg\{-\lambda\pi\sqrt{s}\\
&&\bigg[\frac{\pi}{2}-\arctan\frac{\sqrt{s}+\psi(R_1,s,||y||)}{\eta(R_1,s,||y||)+R_1^2-||y||^2}\bigg]\Bigg\}\nonumber.
\end{IEEEeqnarray}
As $R_1\rightarrow \infty$,
\begin{IEEEeqnarray}{rCl}
&&\lim_{R_1\rightarrow \infty}\arctan\frac{\sqrt{s}+\psi(R_1,s,||y||)}{\eta(R_1,s,||y||)+R_1^2-||y||^2}\nonumber\\
&&=\arctan \frac{2\sqrt{s}}{\infty-||y||^2}\nonumber\\
&&=0,
\end{IEEEeqnarray}
which completes the proof.
\end{proof}

\section{Outage Performance}\label{sec_4}
In this section, the TOP and SOP of the proposed cooperative jamming scheme are analyzed. Similar to Section \ref{sec_3}, we focus again on the cases of $\alpha=2$ and $\alpha=4$. The analysis is based on the Laplace transforms of the sum interference $I(y)$ derived in Section \ref{sec_3}. We first determine the exact expression for the TOP and then obtain both the upper and lower bounds on the SOP.

\subsection{Transmission Outage Probability}
The TOP can be regarded as a measure of the link reliability between the transmitter $o$ and receiver $y_0$. For the Rayleigh fading channel model, the TOP can be directly derived by applying the Laplace transform of the sum interference at the receiver $y_0$ \cite{Haenggi2009}. The following theorem is established to summarize the result of the TOP.
\begin{theorem}\label{theorem_TOP}
Consider a finite Poisson network with nodes distributed over a bi-dimensional disk $\mathcal B(o,D)$ as illustrated in Fig.\ref{fig_sysmodel} and the friendship-based cooperative jamming scheme in Section \ref{sec_2_2}, the TOP of the transmitter-receiver pair is given by
\begin{IEEEeqnarray}{rCl}
p_{to}&&=1-\mathcal L_{I(y_0)}^{\mathrm{\Xi},\alpha}(\beta l^{\alpha}),
\end{IEEEeqnarray}
where $\Xi=\mathrm{E}$, $\mathrm{I}$ and $\mathrm{D}$ denotes the long-range jammer selection policy, $\alpha$ denotes the path loss exponent, and the Laplace transform $\mathcal L_{I(y_0)}^{\Xi,\alpha}(\beta l^{\alpha})$ of the sum interference at the receiver $y_0$ is  given by (\ref{eqn_LT_E2}), (\ref{eqn_LT_I_D2}), (\ref{eqn_LT_E4}), (\ref{eqn_LT_I_D4}) with $||y_0||=l$, $s = \beta l^{\alpha}$ for the cases of $\alpha=2$ and $\alpha=4$, respectively.
\end{theorem}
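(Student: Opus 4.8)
The plan is to evaluate the defining probability $p_{to}=\mathbb P(\mathrm{SIR}_{y_0}<\beta)$ directly, exploiting the exponential distribution of the legitimate-link fading and then recognizing the resulting expectation as precisely the Laplace transform derived in Section \ref{sec_3}, evaluated at the argument $s=\beta l^{\alpha}$.

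First I would substitute the SIR expression $\mathrm{SIR}_{y_0}=h_{o,y_0}l^{-\alpha}/I(y_0)$ into the outage event and rearrange it into a condition on the fading coefficient alone. Since $I(y_0)>0$, the event $\{\mathrm{SIR}_{y_0}<\beta\}$ is equivalent to $\{h_{o,y_0}<\beta l^{\alpha}I(y_0)\}$, so that
\begin{IEEEeqnarray}{rCl}
p_{to}=\mathbb P\bigl(h_{o,y_0}<\beta l^{\alpha}I(y_0)\bigr).
\end{IEEEeqnarray}
Next I would condition on the realization of the interference $I(y_0)$. Because $h_{o,y_0}$ is the fading of the transmitter–receiver link and is independent of both the jammer point process $\Phi_J$ and the jammer fading coefficients $\{h_{x,y_0}\}$ that determine $I(y_0)$, the conditional law of $h_{o,y_0}$ is unchanged. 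Using the unit-mean exponential assumption, the conditional probability becomes the exponential CDF evaluated at the conditioned threshold:
\begin{IEEEeqnarray}{rCl}
\mathbb P\bigl(h_{o,y_0}<\beta l^{\alpha}I(y_0)\,\big|\,I(y_0)\bigr)=1-\exp\bigl(-\beta l^{\alpha}I(y_0)\bigr).
\end{IEEEeqnarray}

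Finally I would take the expectation over $I(y_0)$ and identify the Laplace transform. By the tower property,
\begin{IEEEeqnarray}{rCl}
p_{to}&=&\mathbb E_{I(y_0)}\bigl[1-\exp\bigl(-\beta l^{\alpha}I(y_0)\bigr)\bigr]\nonumber\\
&=&1-\mathbb E_{I(y_0)}\bigl[\exp\bigl(-\beta l^{\alpha}I(y_0)\bigr)\bigr]\nonumber\\
&=&1-\mathcal L_{I(y_0)}^{\Xi,\alpha}(\beta l^{\alpha}),
\end{IEEEeqnarray}
where the last equality is the definition of the Laplace transform of $I(y_0)$ evaluated at $s=\beta l^{\alpha}$. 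Substituting the closed forms (\ref{eqn_LT_E2}), (\ref{eqn_LT_I_D2}), (\ref{eqn_LT_E4}), (\ref{eqn_LT_I_D4}) with $||y||=||y_0||=l$ and $s=\beta l^{\alpha}$ yields the per-policy, per-exponent expressions claimed in the theorem. This argument is essentially the standard Rayleigh-fading outage identity; the only point requiring care is the independence of $h_{o,y_0}$ from the interference statistics, which justifies carrying the exponential CCDF through the conditioning, and the fact that the SIR rearrangement preserves the event since $I(y_0)$ is almost surely positive. No genuine obstacle arises beyond verifying these two points, so the proof is short and the substance of the result lies entirely in the Laplace-transform computations already established.
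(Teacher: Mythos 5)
Your proposal is correct and follows essentially the same route as the paper's own proof: rewrite the outage event as a condition on the exponentially distributed fading coefficient, condition on the interference (the paper conditions on $\Phi_J$, you condition directly on $I(y_0)$ — an immaterial difference), apply the exponential CDF, and recognize the remaining expectation as the Laplace transform at $s=\beta l^{\alpha}$. No gaps.
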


\begin{proof}
From the definition of TOP in (\ref{eqn_df_top}), we have 
\begin{IEEEeqnarray}{rCl}
p_{to}&=&\mathbb P\left(\mathrm{SIR}_{y_0}<\beta\right)\nonumber\\
&=&\mathbb P\left(\frac{h_{o,y_0}l^{-\alpha}}{I(y_0)}<\beta\right)\nonumber\\
&=&\mathbb E_{\Phi_J}\left[\mathbb P\left(\frac{h_{o,y_0}l^{-\alpha}}{I(y_0)}<\beta\big|\Phi_J\right)\right]\nonumber\\
&=&\mathbb E_{\Phi_J}\left[\mathbb P\left(h_{o,y_0}<\beta l^{\alpha} I(y_0)\big|\Phi_J\right)\right]\nonumber\\
&=&1-\mathbb E_{I(y_0)}\left[e^{-\beta l^{\alpha} I(y_0)}\right]\nonumber\\
&=&1- \mathcal L_{I(y_0)}^{\mathrm{\Xi},\alpha}(\beta l^{\alpha}),
\end{IEEEeqnarray}
which completes the proof.
\end{proof}

\subsection{Secrecy Outage Probability}
The SOP is a commonly-used performance metric to quantify the PHY security. In the performance analysis of large-scale systems, the exact SOP is usually unavailable, mainly due to the reason that the analysis involves computing highly cumbersome integrals in terms of the PPPs of both legitimate nodes and eavesdroppers. We therefore resort to obtain the upper and lower bounds on the SOP by applying the bounding technique used in \cite{Zhou2011}. We establish the following theorem to summarize the main results.
\begin{theorem}
Consider a finite Poisson network with nodes distributed over a bi-dimensional disk $\mathcal B(o,D)$ as illustrated in Fig.\ref{fig_sysmodel} and the friendship-based cooperative jamming scheme in Section \ref{sec_2_2}, the upper bound on the SOP of the transmitter-receiver pair is given by
\begin{IEEEeqnarray}{rCl}
p_{so}^{\mathrm{UB}}=1-\mathrm{exp}\left\{-2\pi\lambda_e\int_{0}^{D}\mathcal L_{I(z)}^{\Xi,\alpha}(\beta_er_e^\alpha)r_e\mathrm{d}r_e\right\},
\end{IEEEeqnarray}
and the lower bound is given by 
\begin{IEEEeqnarray}{rCl}
p_{so}^{\mathrm{LB}}=\int_{0}^{D}2\lambda_e\pi r_{e^*}\mathrm{exp}(-\lambda_e \pi r_{e^*}^2)\mathcal L_{I(z^*)}^{\Xi,\alpha}(\beta_e r_{e^*}^{\alpha})\mathrm{d}r_{e^*},
\end{IEEEeqnarray}
where $\Xi=\mathrm{E}$, $\mathrm{I}$ and $\mathrm{D}$ denotes the long-range jammer selection policy, $\alpha$ denotes the path loss exponent, $z^*$ denotes the eavesdropper nearest to the transmitter $o$, $r_{e^*}$ denotes the distance between $z^*$ and $o$,  and the Laplace transform $\mathcal L_{I(z)}^{\Xi,\alpha}(\beta r_e^{\alpha})$ is given by (\ref{eqn_LT_E2}), (\ref{eqn_LT_I_D2}), (\ref{eqn_LT_E4}), (\ref{eqn_LT_I_D4}) with $||z||=r_e$, $s = \beta r_e^{\alpha}$ for the cases of $\alpha = 2$ and $\alpha = 4$, respectively. 
\end{theorem}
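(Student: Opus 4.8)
The plan is to establish the two bounds by two independent and classical arguments: a single--nearest--eavesdropper argument for the lower bound, and a correlation (FKG) inequality that replaces the dependent eavesdropper successes by an independently thinned Poisson process for the upper bound. For the lower bound I would start from the observation that the success of the nearest eavesdropper $z^*$ is one of the events composing the union in (\ref{eqn_df_sop}), so that $p_{so}\ge\mathbb P(\mathrm{SIR}_{z^*}>\beta_e)$. Conditioning on the distance $r_{e^*}=||z^*||$, whose density over $\mathcal B(o,D)$ is the standard $2\pi\lambda_e r e^{-\lambda_e\pi r^2}$ on $[0,D]$ (its total mass $1-e^{-\lambda_e\pi D^2}$ correctly assigns no outage to an empty network), I would note that $I(z^*)$ is generated by $\Phi_J\subset\Phi$, which is independent of $\Phi_E$; hence the nearest--neighbour conditioning does not alter the interference law. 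The Rayleigh assumption then gives $\mathbb P(\mathrm{SIR}_{z^*}>\beta_e\mid r_{e^*}=r)=\mathbb E[e^{-\beta_e r^\alpha I(z^*)}]=\mathcal L_{I(z^*)}^{\Xi,\alpha}(\beta_e r^\alpha)$, where radial symmetry ensures that only $||z^*||$ enters, and integrating against the nearest--neighbour density yields $p_{so}^{\mathrm{LB}}$.

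For the upper bound I would work with the complementary event $1-p_{so}=\mathbb P\big(\bigcap_{z\in\Phi_E}\{\mathrm{SIR}_z\le\beta_e\}\big)$. Conditioned on all node locations, the events $\{\mathrm{SIR}_z\le\beta_e\}$ involve disjoint families of fading variables ($h_{o,z}$ and $\{h_{x,z}\}_{x}$ carry the index $z$), so they are conditionally independent; integrating out the fading in each factor gives $\mathbb P(\mathrm{SIR}_z\le\beta_e\mid\Phi_J)=:\tilde g_z(\Phi_J)$ and $\mathbb P\big(\bigcap_z\{\mathrm{SIR}_z\le\beta_e\}\mid\Phi_E\big)=\mathbb E_{\Phi_J}\big[\prod_{z}\tilde g_z(\Phi_J)\big]$. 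Each $\tilde g_z$ is an \emph{increasing} functional of $\Phi_J$ --- adding a jammer can only raise the interference and hence the outage chance --- and since $\Phi_J$ is a Poisson process, the FKG inequality gives $\mathbb E_{\Phi_J}[\prod_z\tilde g_z]\ge\prod_z\mathbb E_{\Phi_J}[\tilde g_z]=\prod_z\big(1-\mathcal L_{I(z)}^{\Xi,\alpha}(\beta_e||z||^\alpha)\big)$, where the last equality is simply the definition of the interference Laplace transform. Taking the remaining expectation over $\Phi_E$ and invoking the probability generating functional of the eavesdropper PPP turns the product into $\exp\big(-\lambda_e\int_{\mathcal B(o,D)}\mathcal L_{I(z)}^{\Xi,\alpha}(\beta_e||z||^\alpha)\,\mathrm dz\big)$; polar coordinates then reduce the exponent to $2\pi\lambda_e\int_0^D\mathcal L_{I(z)}^{\Xi,\alpha}(\beta_e r_e^\alpha)r_e\,\mathrm dr_e$, and rearranging $1-p_{so}\ge\exp(\cdots)$ produces $p_{so}\le p_{so}^{\mathrm{UB}}$.

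The main obstacle is the upper--bound step: one must verify both that each factor $\tilde g_z$ is genuinely monotone in the jammer configuration and that the positive correlation of the common interference forces the independent--thinning expression to \emph{over}-estimate the secrecy, so that it is a valid upper bound on $p_{so}$ rather than a lower one; this is precisely the role played by the positive--association argument. By contrast, the lower bound is routine once it is confirmed that the nearest--neighbour conditioning leaves the interference statistics --- driven by the independent process $\Phi$ --- untouched, so that the Laplace transforms of Section \ref{sec_3} can be reused verbatim with $||y||=r_{e^*}$ and $s=\beta_e r_{e^*}^\alpha$.
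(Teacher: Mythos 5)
Your lower bound is exactly the paper's argument: restrict the union to the nearest eavesdropper $z^*$, use the independence of $\Phi_E$ and $\Phi_J$ (so the nearest-neighbour conditioning does not perturb the interference law), apply the Rayleigh/Laplace-transform identity, and integrate against the density $2\lambda_e\pi r e^{-\lambda_e\pi r^2}$ on $[0,D]$, which the paper derives in its Appendix~D. Your upper bound, however, reaches the same expression by a genuinely different route. The paper first applies the probability generating functional of $\Phi_E$ \emph{conditionally on} $\Phi_J$, obtaining the exact representation $1-p_{so}=\mathbb E_{\Phi_J}\bigl[\exp\bigl\{-\lambda_e\int_{\mathcal B(o,D)}\mathbb P(\mathrm{SIR}_z>\beta_e\mid\Phi_J)\,\mathrm dz\bigr\}\bigr]$, and then invokes Jensen's inequality for the convex map $x\mapsto e^{-x}$ to pull the expectation over $\Phi_J$ inside the exponent. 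You instead keep the product over $z\in\Phi_E$, argue that each conditional outage probability $\tilde g_z(\Phi_J)$ is an increasing functional of the jammer configuration, apply the Harris--FKG positive-association inequality for the (inhomogeneous) Poisson process $\Phi_J$ to get $\mathbb E_{\Phi_J}[\prod_z\tilde g_z]\ge\prod_z\mathbb E_{\Phi_J}[\tilde g_z]$, and only then apply the PGFL of $\Phi_E$. Both orderings are legitimate because $\Phi_E\perp\Phi_J$, and both land on the identical closed form. The paper's Jensen step is the more elementary and self-contained of the two (convexity of $e^{-x}$ plus the PGFL, nothing else); your FKG step needs a heavier tool and the explicit monotonicity check (which does hold: adding a jammer multiplies $\mathbb E[e^{-sI(z)}\mid\Phi_J]$ by a factor in $(0,1)$, so $\tilde g_z$ increases), but it buys a cleaner structural explanation of \emph{why} the bound goes the right way --- the shared interference positively correlates the eavesdroppers' failure events, so the independently-thinned expression overestimates the probability that all of them fail simultaneously, i.e., underestimates $p_{so}$'s complement. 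Your proof is correct as stated.
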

\begin {proof}

From the definition of SOP in (\ref{eqn_df_sop}), we have
\begin{IEEEeqnarray}{rCl}
&&p_{so}=\mathbb P\left(\bigcup_{z\in\Phi_E}\mathrm{SIR}_{y_0}>\beta_e\right)\nonumber\\
&&=1-\mathbb P\left(\bigcap_{z\in \Phi_E}\mathrm{SIR}_z<\beta_e\right)\nonumber\\
&&=1-\mathbb E_{\Phi_J}\left[\mathbb E_{\Phi_E}\left[\mathbb P\left(\bigcap_{z\in \Phi_E}\frac{h_{o,z}||z||^{-\alpha}}{I(z)}<\beta_e\big|\Phi_E,\Phi_J\right)\right]\right]\nonumber\\
&&\overset{(a)}=1-\mathbb E_{\Phi_J}\left[\mathbb E_{\Phi_E}\left[\prod_{z\in \Phi_E}\mathbb P\left(\frac{h_{o,z}||z||^{-\alpha}}{I(z)}<\beta_e\big|\Phi_E,\Phi_J\right)\right]\right]\nonumber\\
&&=1-\mathbb E_{\Phi_J}\Bigg[\mathbb E_{\Phi_E}\Bigg[\nonumber\\
&&~~~\prod_{z\in \Phi_E}\left(1-\mathbb P\left(\frac{h_{o,z}||z||^{-\alpha}}{I(z)}>\beta_e\big|\Phi_E,\Phi_J\right)\right)\Bigg]\Bigg]\nonumber\\
&&\overset{(b)}=1-\mathbb E_{\Phi_J}\Bigg[\mathrm{exp}\Bigg\{\nonumber\\
&&~~~-\lambda_e\int_{\mathcal B(o,D)}\mathbb P\left(\frac{h_{o,z}||z||^{-\alpha}}{I(z)}>\beta_e\big|\Phi_J\right)\mathrm{d}z\Bigg\}\Bigg],
\end{IEEEeqnarray}
where $(a)$ follows since $h_{o,z}$, $z\in\Phi_E$ are i.i.d. random variables, and $(b)$ follows from applying the probability generating functional of $\Phi_E$.
Applying the Jensen's Inequality yields the upper bound on $p_{so}$   
\begin{IEEEeqnarray}{rCl}
p_{so}&\leq&1-\mathrm{exp}\Bigg\{\nonumber\\
&&-\lambda_e\int_{\mathcal B(o,D)}\mathbb E_{\Phi_J}\left[\mathbb P\left(\frac{h_{o,z}||z||^{-\alpha}}{I(z)}>\beta_e\big|\Phi_J\right)\right]\mathrm{d}z\Bigg\}\nonumber\\
&=&1-\mathrm{exp}\left\{-\lambda_e\int_{\mathcal B(o,D)}\mathcal L_{I(z)}^{\mathrm{\Xi},\alpha}(\beta_e||z||^\alpha)\mathrm{d}z\right\}\nonumber\\
&=&1-\mathrm{exp}\left\{-2\pi\lambda_e\int_{0}^{D}\mathcal L_{I(z)}^{\mathrm{\Xi},\alpha}(\beta_er_e^\alpha)r_e\mathrm{d}r_e\right\}.
\end{IEEEeqnarray}

The lower bound is obtained by considering only the eavesdropper $z^*$ nearest to the transmitter $o$. Let $R_{z^*}$ denote the random distance between $z^*$ and $o$. The probability distribution function of $R_{z^*}$ can be given by 
\begin{IEEEeqnarray}{rCl}
f_{R_{z^*}}(r_{e^*})=
\left\{\begin{matrix}
 2\lambda_e\pi r_{e^*}\mathrm{exp}(-\lambda_e \pi r_{e^*}^2), & 0\leq r_{e^*}\leq D\nonumber \\ 
 0,& \mathrm{otherwise} 
\end{matrix}\right..
\end{IEEEeqnarray}
Please refer to Appendix \ref{sec_app4} for the proof. The SOP can then be bounded from below by the probability that $z^*$ causes a secrecy outage, i.e., 
\begin{IEEEeqnarray}{rCl}
p_{so}&\geq&\mathbb P(\mathrm{SIR}_{z^*}>\beta_e)\\
&=&\int_{0}^{D}\mathbb P\left(\frac{h_{o,z^*}r_{e^*}^{-\alpha}}{I(z^*)}>\beta_e\right)f_{R_{z^*}}(r_{e^*})\mathrm{d}r_{z^*}\nonumber\\
&=&\int_{0}^{D}2\lambda_e\pi r_{e^*}\mathrm{exp}(-\lambda_e \pi r_{e^*}^2)\mathcal L_{I(z^*)}^{\mathrm{\Xi},\alpha}(\beta_e r_{e^*}^{\alpha})\mathrm{d}r_{e^*}.\nonumber
\end{IEEEeqnarray}
\end{proof}

\begin{corollary}\label{corollary2}
As the network size tends to infinity, i.e., $D\rightarrow \infty$, the SOP $p_{so}\rightarrow 1$ under all long-range jammer selection policies $\mathrm{E}$, $\mathrm{I}$ and $\mathrm{D}$ for the cases of $\alpha=2$ and $\alpha=4$. 
\end{corollary}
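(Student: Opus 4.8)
The plan is to observe first that the lower bound established in the preceding theorem is \emph{not} strong enough to yield the claim: that bound retains only the eavesdropper $z^*$ nearest to $o$, whose distance $r_{e^*}$ has the density $2\lambda_e\pi r_{e^*}\exp(-\lambda_e\pi r_{e^*}^2)$, which is independent of $D$ for large $D$. Hence $z^*$ stays at distance $O(1)$ from the jamming region, and $p_{so}^{\mathrm{LB}}$ converges to a constant strictly below $1$. The correct mechanism is instead the proliferation of \emph{far-away} eavesdroppers as the disk grows. Accordingly, I would return to the exact intermediate identity obtained in the SOP proof \emph{before} Jensen's inequality is applied, namely $p_{so}=1-\mathbb E_{\Phi_J}\bigl[\exp\{-\lambda_e J_D\}\bigr]$, where $J_D=\int_{\mathcal B(o,D)}\mathbb P\bigl(\mathrm{SIR}_z>\beta_e\,\big|\,\Phi_J\bigr)\,\mathrm{d}z$. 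To prove $p_{so}\to1$ it then suffices to show $J_D\to\infty$ as $D\to\infty$ for every realization of $\Phi_J$, and to pass the limit through the expectation.

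The second step is to write the conditional interception probability in product form. Conditioned on the jammer set, $h_{o,z}\sim\mathrm{Exp}(1)$ is independent of $I(z)$, so exactly as in the Laplace-transform derivation of Section \ref{sec_3},
\begin{IEEEeqnarray}{rCl}
\mathbb P\bigl(\mathrm{SIR}_z>\beta_e\,\big|\,\Phi_J\bigr)=\prod_{x\in\Phi_J}\frac{1}{1+\beta_e\|z\|^{\alpha}\|x-z\|^{-\alpha}}.
\end{IEEEeqnarray}
The crucial structural fact I would exploit here is that all jammers lie in the bounded disk $\mathcal B(o,R_2)$ with $R_2$ a \emph{fixed} scheme parameter, so $M:=|\Phi_J|$ is Poisson with a finite, $D$-independent mean and is therefore finite almost surely. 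This holds identically for Policies $\mathrm E$, $\mathrm I$, $\mathrm D$, since the three policies only reshape a thinning confined to $\mathcal A_2\subset\mathcal B(o,R_2)$.

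The third step is the far-field bound that makes $J_D$ diverge. For any $z$ with $\|z\|=r_e\geq 2R_2$ and any jammer $x$ with $\|x\|\leq R_2$ one has $\|x-z\|\geq r_e-R_2\geq r_e/2$, whence $\beta_e\|z\|^{\alpha}\|x-z\|^{-\alpha}\leq\beta_e 2^{\alpha}=:K$ uniformly in $x$. Consequently the integrand is bounded below by $(1+K)^{-M}>0$ on the whole annulus $\{2R_2\leq\|z\|\leq D\}$, giving $J_D\geq(1+K)^{-M}\,\pi\bigl(D^{2}-4R_2^{2}\bigr)$, which tends to $\infty$ as $D\to\infty$ for each fixed finite $M$. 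Since the integrand $\exp\{-\lambda_e J_D\}$ is bounded by $1$, dominated convergence yields $\mathbb E_{\Phi_J}[\exp\{-\lambda_e J_D\}]\to0$, and therefore $p_{so}\to1$. The argument is manifestly uniform in the path-loss exponent (it uses only $\alpha\geq2$ implicitly through $K<\infty$), so it covers both $\alpha=2$ and $\alpha=4$ at once.

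The main obstacle is conceptual rather than computational: recognizing that the stated lower bound cannot deliver the result and that one must instead track the growing population of distant eavesdroppers, each of which retains an interception probability bounded away from zero because, in the far field, the desired signal and the aggregate jamming both decay like $r_e^{-\alpha}$ and their ratio converges to the fading ratio $h_{o,z}/\sum_x h_{x,z}$. The remaining technical points are routine: the uniform far-field comparison $\|x-z\|\geq r_e-R_2$, the almost-sure finiteness of $M$ (which rules out the pathological case of infinitely many jammers and justifies treating $(1+K)^{-M}$ as a strictly positive constant per realization), and the dominated-convergence interchange of limit and expectation.
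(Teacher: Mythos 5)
Your proof is correct, and while it rests on the same physical mechanism as the paper's --- jammers are confined to the fixed disk $\mathcal B(o,R_2)$, so a far eavesdropper sees signal and interference decaying at the same rate and retains an interception probability bounded away from zero, and the population of such eavesdroppers grows without bound --- the execution is genuinely different. The paper restricts attention to a thin boundary annulus $D-\epsilon\leq\|z\|\leq D$, lower-bounds the \emph{averaged} interception probability $\mathcal L_{I(z)}^{\Xi,\alpha}(\beta_e\|z\|^{\alpha})$ by the full-jamming case, computes the explicit limit $\exp\bigl(-\lambda\pi R_2^2\,\beta_e/(\beta_e+1)\bigr)$, and then concludes via the heuristic $1-(1-c)^{\infty}=1$. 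You instead return to the exact pre-Jensen identity, condition on $\Phi_J$, use the elementary uniform far-field estimate $\|x-z\|\geq r_e-R_2\geq r_e/2$ over the entire region $\|z\|\geq 2R_2$ to show the exponent $J_D$ diverges almost surely, and finish with dominated convergence. Your route buys two things: it is uniform in the policy and in $\alpha\geq 2$ with no explicit Laplace-transform computation, and --- more importantly --- the conditioning on $\Phi_J$ makes the final step airtight, whereas the paper's $1-(1-c)^{\infty}$ argument quietly treats the per-eavesdropper interception events as independent even though they are positively correlated through the shared jammer configuration (so a product bound on the unconditional non-interception probabilities goes in the wrong direction). The paper's version, in exchange, yields a concrete closed-form constant for the asymptotic per-eavesdropper interception probability, which your $(1+K)^{-M}$ bound does not. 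Your opening observation that the theorem's nearest-eavesdropper lower bound cannot deliver the corollary is also accurate and consistent with the paper, which does not use that bound here.
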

\begin{proof}
See Appendix \ref{sec_app5} for the proof.
\end{proof}

\begin{figure*}[!t]
\centering
\subfloat[TOP for $\alpha=2$]{\includegraphics[width=3in]{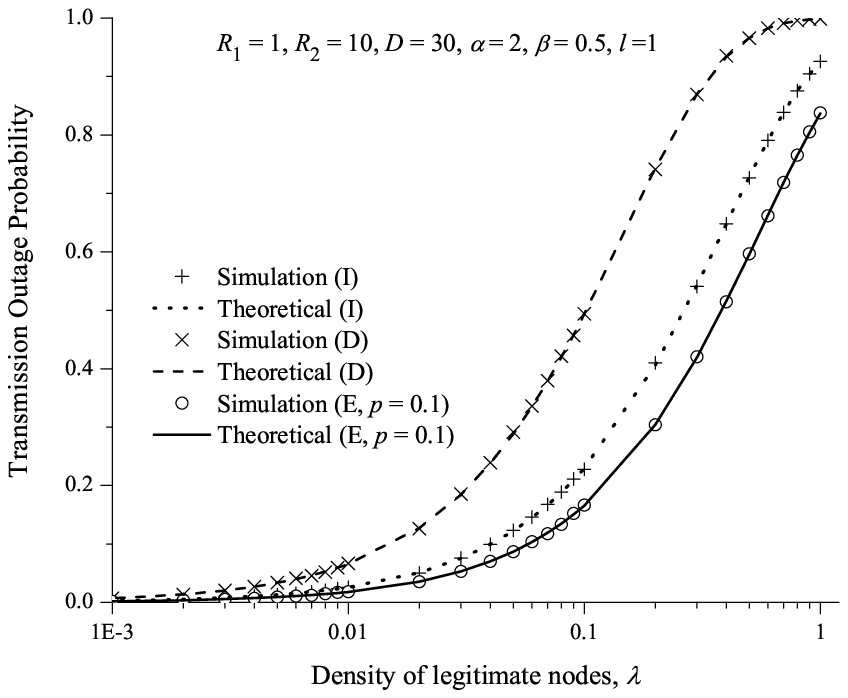}
\label{fig_TOP2_Sim}}
\hfil
\subfloat[SOP for $\alpha=2$]{\includegraphics[width=3in]{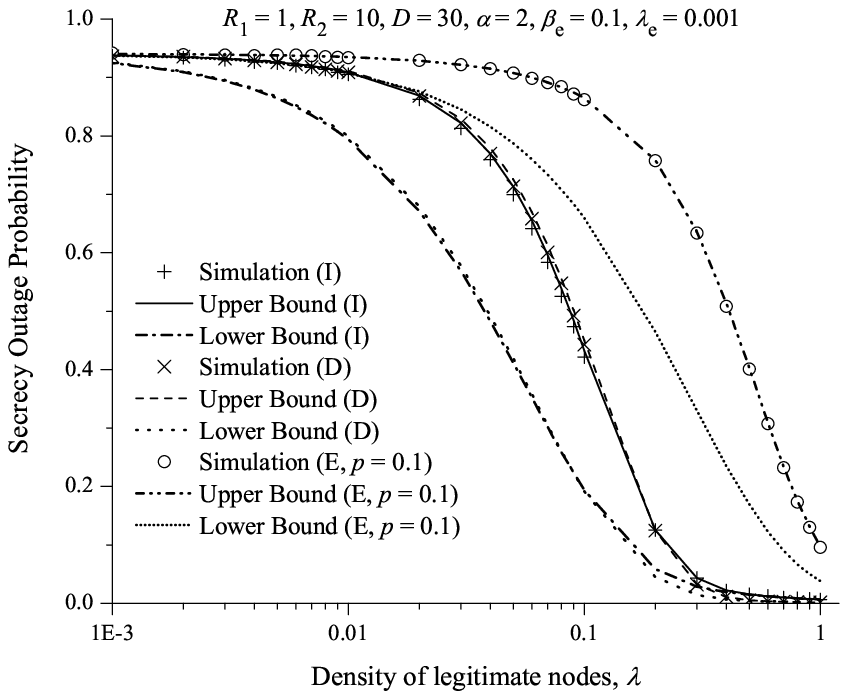}
\label{fig_SOP2_Sim}}
\vfil
\subfloat[TOP for $\alpha=4$]{\includegraphics[width=3in]{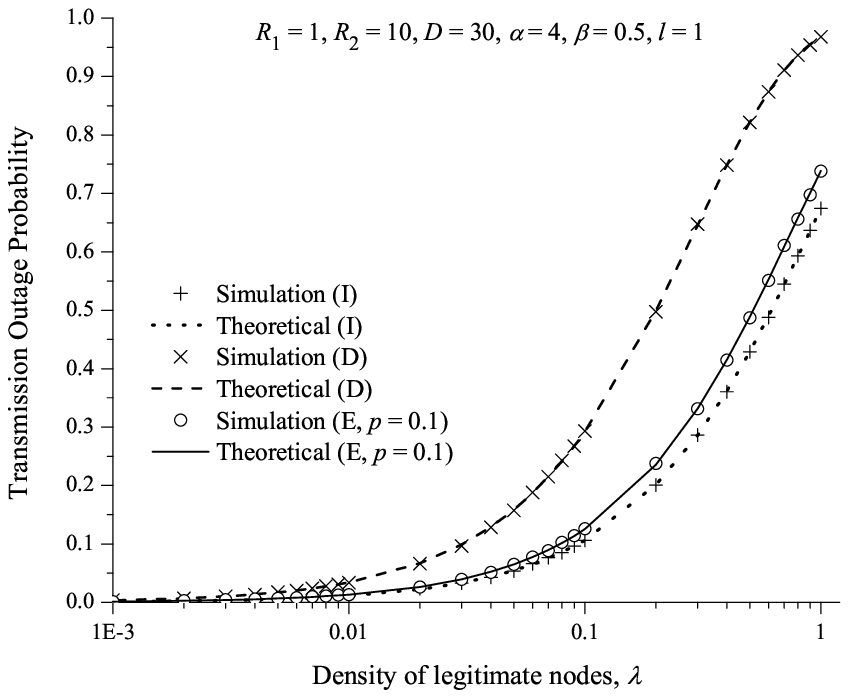}
\label{fig_TOP4_Sim}}
\hfil
\subfloat[SOP for $\alpha=4$]{\includegraphics[width=3in]{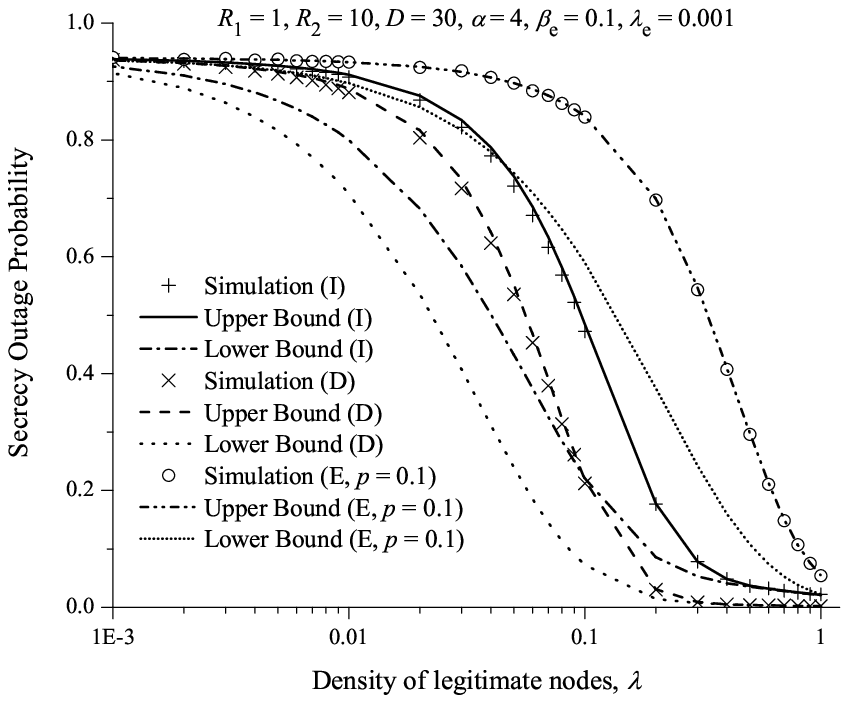}
\label{fig_SOP4_Sim}}
\caption{Simulation results vs. Theoretical results for TOP and SOP.}
\label{fig_sim}
\end{figure*}

\section{Numerical Results and Discussions}\label{sec_5}
In this section, we first conduct extensive simulations to verify the theoretical analysis of TOP and SOP. We then explore how the parameters of the friendship-based cooperative jamming scheme affect the TOP and SOP performances of the legitimate transmission. Finally, the impacts of the transmitter-receiver location and network size on the TOP and SOP performances are investigated. 
\subsection{Simulation Setting}
A simulator based on C++ was developed to simulate the PPPs $\Phi$ and $\Phi_E$, the friendship-based cooperative jamming model and the transmission process between the transmitter $o$ and receiver $y_0$, which is now available at \cite{Mdval4}. The PPP $\Phi$ ($\Phi_E$) is simulated by applying the method in \cite{Chiu2013}, where the first step is to generate a Poisson-distributed number $M$ with mean $\lambda \pi D^2$ (the mean is $\lambda_e \pi D^2$ for $\Phi_E$) and the second step is to distribute $M$ nodes uniformly over the network $\mathcal B(o,D)$. The total number of transmitter-receiver transmissions is fixed as $100000$ and the common transmit power is fixed as $1$. The TOP is calculated as the ratio of the number $n_{to}$ of transmissions with transmission outage to the total transmission number, i.e., $$\mathrm{TOP} = \frac{n_{to}}{100000}.$$ Similarly, The SOP is calculated as $$\mathrm{SOP} = \frac{n_{so}}{100000},$$ where $n_{so}$ is the number of transmissions with secrecy outage.

\subsection{Analysis Validation}

Extensive simulations have been conducted to verify the theoretical analysis of TOP and SOP. We considered the cases of $\alpha=2$ and $\alpha=4$ and examined how the TOP and SOP vary with the density of legitimate nodes $\lambda$ under three long-range jammer selection policies $\mathrm{E}$, $\mathrm{I}$ and $\mathrm{D}$.  For both path loss cases, the network radius was fixed as $D=30$ and the density of eavesdroppers was fixed as $\lambda_e=0.001$. For the friendship-based cooperative jamming scheme, the radius of the LFC was fixed as $R_1=1$, the outer radius of the LFA was fixed as $R_2=10$ and the selection probability in Policy $\mathrm{E}$ was set as $p=0.1$. The \textrm{SIR} thresholds were fixed as $\beta=0.5$ for the receiver $y_0$ and $\beta_e=0.1$ for eavesdroppers. The transmitter-receiver distance was set as $l=1$. The corresponding simulation results and theoretical results are summarized in Fig. \ref{fig_sim}.

Fig. \ref{fig_TOP2_Sim} and Fig. \ref{fig_TOP4_Sim} indicate clearly that the simulation results of TOP match nicely with the theoretical ones, so our theoretical results can be applied to model the TOP performance of the Poisson networks under Policy $\mathrm{E}$, Policy $\mathrm{I}$ and Policy $\mathrm{D}$ for the cases of $\alpha=2$ and $\alpha=4$. Fig. \ref{fig_SOP2_Sim} and Fig. \ref{fig_SOP4_Sim} indicate that the simulation results of SOP are very close to the corresponding theoretical upper bounds, while they are different from the lower bounds, so our theoretical upper bounds can serve as accurate approximations for the exact SOP of the legitimate transmission under Policy $\mathrm{E}$, Policy $\mathrm{I}$ and Policy $\mathrm{D}$ for the cases of $\alpha=2$ and $\alpha=4$. In the following, we mainly focus on the case of $\alpha=4$, as the behaviors of TOP and SOP for $\alpha=2$ and $\alpha=4$ are similar. In addition, we use the theoretical upper bounds on SOP in the discussions of the SOP performance.  

\begin{figure}[!t]
\centering
\includegraphics[width=3in]{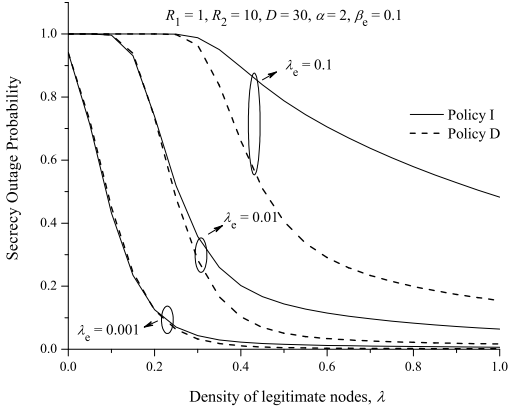}
\caption{SOP gap between Policy $\mathrm{I}$ and Policy $\mathrm{D}$ for $\alpha=2$.}
\label{fig_SOPGap}
\end{figure}

\subsection{TOP and SOP vs. Jamming Parameters}

We now explore how the TOP and SOP performances of the network vary with the parameters of the friendship-based cooperative jamming scheme with different long-range jammer selection policies.

\subsubsection {TOP and SOP vs. $\lambda$} 
We first examine the impact of the density of legitimate nodes $\lambda$ on the TOP and SOP performances.  It can be observed from Fig. \ref{fig_sim} that the TOP increases as $\lambda$ increases, while the SOP decreases as $\lambda$ increases under all policies $\mathrm{E}$, $\mathrm{I}$ and $\mathrm{D}$ for both $\alpha=2$ and $\alpha=4$. This is very intuitive since a larger sum interference can be generated in the network as $\lambda$ increases, degrading both the transmitter-receiver channel and eavesdropper channels. An interesting observation from Fig. \ref{fig_SOP2_Sim} indicates that Policy $\mathrm{I}$ and Policy $\mathrm{D}$ achieve almost the same SOP for $\alpha=2$ and $\lambda_e=0.001$. However, this is not the case for other settings of $\lambda_e$, as we can observe from Fig. \ref{fig_SOPGap}. Actually, as shown in Fig. \ref{fig_sim} and Fig. \ref{fig_SOPGap} that, in general, Policy $\mathrm{I}$ outperforms Policy $\mathrm{D}$ in terms of the TOP performance, while Policy $\mathrm{D}$ can ensure a better SOP performance than Policy $\mathrm{I}$. This is due to the following two reasons. The first one is that Policy $\mathrm{D}$ has much more long-range jammers than Policy $\mathrm{I}$, so it will generate more interference in the network, resulting in a better SOP performance but a worse TOP performance. The other reason is that the long-range jammers of Policy $\mathrm{D}$ are much closer to the transmitter than those of Policy $\mathrm{I}$. Notice that near (i.e., close to the transmitter) eavesdroppers dominate the behavior of SOP, so Policy $\mathrm{D}$ is more effective to suppress near eavesdroppers than Policy $\mathrm{I}$, achieving a better SOP performance.

Notice that in Fig. \ref{fig_sim}, the jammer selection probability of Policy $\mathrm{E}$ is fixed as $p=0.1$, which corresponds to a weak long-range jamming scenario. For the moderate long-range jamming scenario ($p=0.5$) and strong long-range jamming scenario ($p=1.0$), Fig. \ref{fig_TOPSOPvsP} shows TOP and SOP vs. $\lambda$ for $\alpha=4$. As shown in Fig. \ref{fig_TOPSOPvsP} that the behaviors of TOP and SOP are similar for different $p$. One can also observe from Fig. \ref{fig_TOPSOPvsP} that the TOP increases as $p$ increases, while the SOP decreases as $p$ increases. This indicates that we can flexibly control the TOP and SOP performances of Policy $\mathrm{E}$ by varying the long-range jammer selection probability $p$.
\begin{figure}[!t] 
\centering
\subfloat[TOP vs. $p$]{\includegraphics[width=3in]{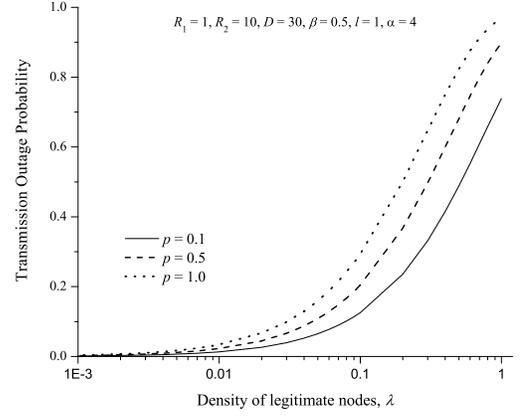}
\label{fig_TOPvsP}}
\vfil
\subfloat[SOP vs. $p$]{\includegraphics[width=3in]{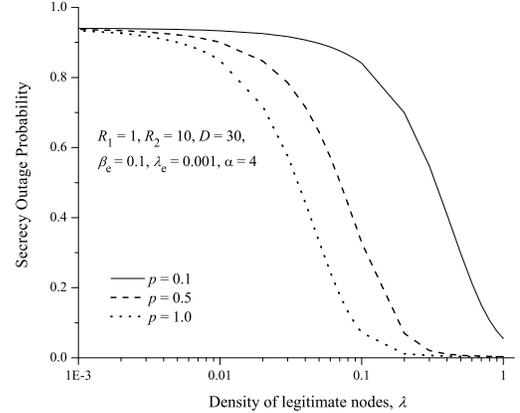}
\label{fig_SOPvsP}}
\caption{Impact of $p$ on TOP and SOP for Policy $\mathrm{E}.$}
\label{fig_TOPSOPvsP}
\end{figure}

\begin{figure}[!t]
\centering
\subfloat[TOP vs. $R_1$]{\includegraphics[width=3in]{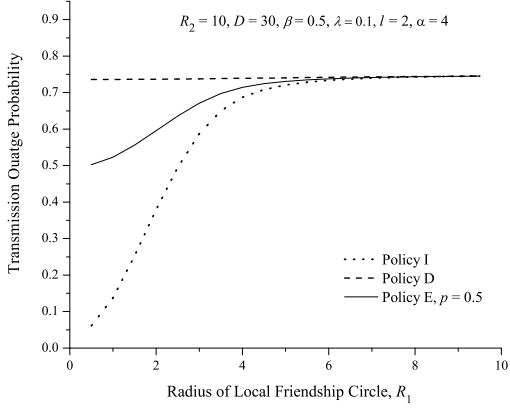}
\label{fig_TOPvsR1}}
\vfil
\subfloat[SOP vs. $R_1$]{\includegraphics[width=3in]{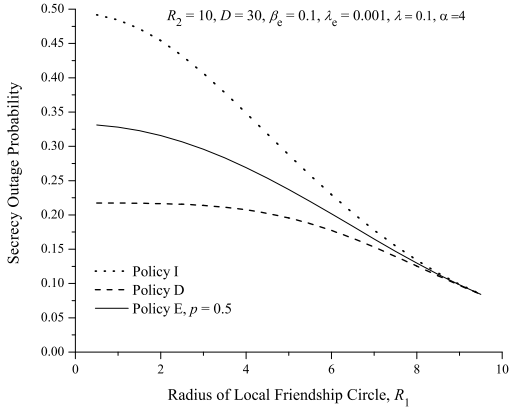}
\label{fig_SOPvsR1}}
\caption{Impact of $R_1$ on TOP and SOP.}
\label{fig_TOPSOPvsR1}
\end{figure}

\subsubsection{TOP and SOP vs. $R_1$}
We now investigate how the TOP and SOP performances are affected by the radius of LFC $R_1$, i.e., the inner radius of LFA. For the scenario of $R_2=10$, $D=30$, $\beta=0.5$, $\lambda=0.1$, $l=2$ and $\alpha=4$, Fig. \ref{fig_TOPvsR1} illustrates how the TOP varies with $R_1$ for Policy  $\mathrm{I}$, Policy  $\mathrm{D}$ and Policy $\mathrm{E}$ with $p=0.5$. We can see from Fig. \ref{fig_TOPvsR1} that the TOP first increases as $R_1$ increases, then saturates to a constant value and finally stays almost the same for Policy $\mathrm{I}$ and Policy $\mathrm{E}$. Actually, this is also the case for Policy $\mathrm{D}$. The increasing behavior of TOP is because that the total number of jammers increases as $R_1$ increases, although the number of long-range jammers decreases, which results in a larger sum interference in the network. The behavior that TOP of all policies saturates to a same constant is due to the fact that all policies finally reach to the same jamming pattern at the point of $R_1=R_2$. For the scenario of $R_2=10$, $D=30$, $\beta_e=0.1$, $\lambda_e=0.001$, $\lambda=0.1$ and $\alpha=4$, Fig. \ref{fig_SOPvsR1} shows how the SOP varies with $R_1$ for Policy  $\mathrm{I}$, Policy  $\mathrm{D}$ and Policy $\mathrm{E}$ with $p=0.5$. It can be observed from Fig. \ref{fig_SOPvsR1} that the SOP first decreases as $R_1$ increases, then saturates to a constant value and finally stays almost the same for all policies. This is due to the same reason as explained above.

\begin{figure}[!t]
\centering
\subfloat[TOP vs. $R_2$]{\includegraphics[width=3in]{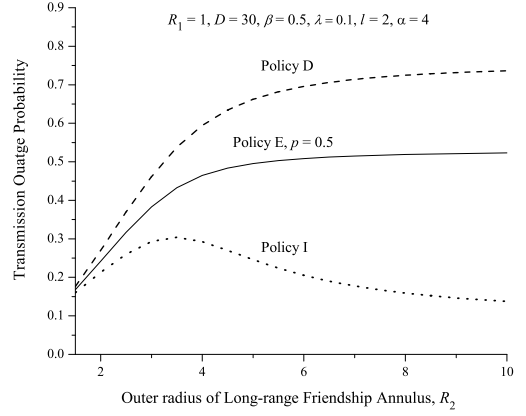}
\label{fig_TOPvsR2}}
\vfil
\subfloat[SOP vs. $R_2$]{\includegraphics[width=3in]{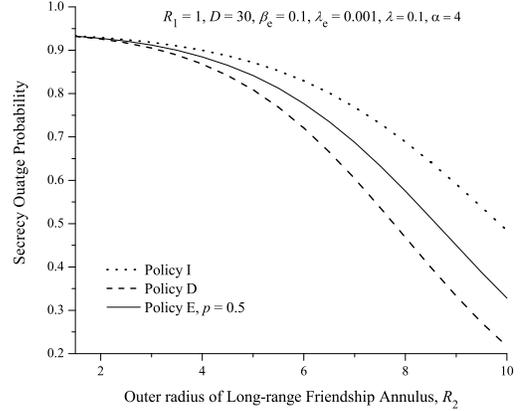}
\label{fig_SOPvsR2}}
\caption{Impact of $R_2$ on TOP and SOP.}
\label{fig_TOPSOPvsR2}
\end{figure}

\subsubsection{TOP and SOP vs. $R_2$}
Regarding the impact of the outer radius of LFA $R_2$ on the TOP performance, we show in Fig. \ref{fig_TOPvsR2} how the TOP varies with $R_2$ for Policy $\mathrm{I}$, Policy $\mathrm{D}$ and Policy $\mathrm{E}$ with $p=0.5$ under the settings of $R_1=1$, $D=30$, $\beta=0.5$, $\lambda=0.1$, $l=2$ and $\alpha=4$. As shown in Fig. \ref{fig_TOPvsR2} that the TOP of Policy $\mathrm{E}$ and Policy $\mathrm{D}$ always monotonically increases as $R_2$ increases, but this is not the case for Policy $\mathrm{I}$. The increasing behavior of TOP for all policies are because that the number of long-range jammers increases as $R_2$ increases, generating a larger sum interference in the network. The decreasing behavior of TOP for Policy $\mathrm{I}$ is due to that its long-range jammers are getting further away from the receiver as $R_2$ continues to increase, since these jammers are mainly located in a small annulus region near $R_2$, as we can deduce from (\ref{eqn_policyI}). For the impact of $R_2$ on the SOP performance, we illustrate in Fig. \ref{fig_SOPvsR2} SOP vs. $R_2$ for Policy $\mathrm{I}$, Policy $\mathrm{D}$ and Policy $\mathrm{E}$ with $p=0.5$ under the settings of $R_1=1$, $D=30$, $\beta_e=0.1$, $\lambda_e=0.001$, $\lambda=0.1$ and $\alpha=4$. As expected, we can observe from Fig.\ref{fig_SOPvsR2} that the SOP decreases as $R_2$ increases for all policies.

\subsection{SOP vs. Network Radius $D$}

We now explore how the SOP performance varies with the network radius $D$. For the scenario of $R_1=1$, $R_2=10$, $\beta_e=0.1$, $\lambda_e=0.001$, $\lambda=0.1$ and $\alpha=4$, Fig. \ref{fig_SOPvsD} illustrates how the SOP varies with $D$ for Policy $\mathrm{I}$, Policy $\mathrm{D}$ and Policy $\mathrm{E}$ with $p=0.5$. It is interesting to notice from Fig. \ref{fig_SOPvsD} that the SOP increases as the network radius $D$ increases and finally approaches $1$ for all policies, which is in accordance with Corollary \ref{corollary2}. Notice that this is somewhat counter-intuitive, since one might think that the SOP should finally approach a constant determined by $\beta_e$, $\lambda_e$, $\lambda$ and $\alpha$, like the result in \cite{Zhou2011} for infinite Poisson networks without the consideration of social friendships. Actually, since no jammers are for counteracting the eavesdroppers that are very far away from the transmitter in the friendship-based cooperative jamming scheme, the impacts of these eavesdroppers on the SOP cannot be simply neglected. 
\begin{figure}[!t]
\centering
\includegraphics[width=3in]{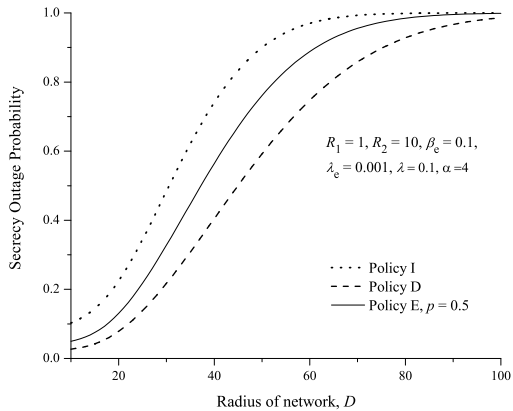}
\caption{Impact of network radius $D$ on SOP.}
\label{fig_SOPvsD}
\end{figure}

\subsection{TOP vs. Transmitter-Receiver Distance $l$}

\begin{figure}[!t]
\centering
\includegraphics[width=3in]{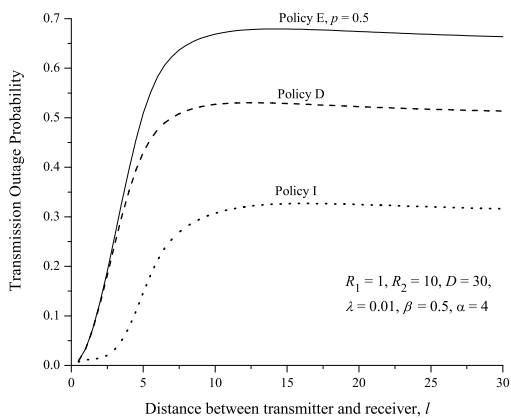}
\caption{Impact of transmitter-receiver distance $l$ on TOP.}
\label{fig_TOPvsL}
\end{figure}

To explore the impact of the transmitter-receiver distance $l$ on the TOP performance, we show in Fig. \ref{fig_TOPvsL} how the TOP varies with $l$ for Policy $\mathrm{I}$, Policy $\mathrm{D}$ and Policy $\mathrm{E}$ with $p=0.5$ under the settings of $R_1=1$, $R_2=10$, $D=30$, $\beta=0.5$, $\lambda=0.01$ and $\alpha=4$. We can observe from Fig. \ref{fig_TOPvsL} that the TOP increases as $l$ increases for all policies, which is intuitive since the received power decreases as $l$ increases. It is interesting to see from Fig. \ref{fig_TOPvsL} that the TOP finally saturates to a constant value for all policies. This is due to that as $l$ tends to infinity, the Laplace transform $\mathcal L_{I(y_0)}^{\mathrm{\Xi},\alpha}(\beta l^{\alpha})$ approaches a constant, which is easy to prove according to the proof of Corollary (\ref{corollary2}).

\section{Related Works}\label{sec_6}
Extensive research efforts have been devoted to the PHY-security based secure communications of Poisson networks without the consideration of social relationships, which can be roughly categorized according to the network scenarios they considered.

In general Poisson networks, the locations of eavesdroppers and legitimate nodes are usually modeled as independent and homogeneous PPPs with different intensities.  Some PHY-security properties of the networks were analyzed from the perspective of secrecy graph, like the secure connectivity, the maximum secrecy rate  and secrecy outage probability of a single link \cite{Pinto2012PartI, Pinto2012PartII}. Modeling the additional interfers as another independent homogeneous PPP, the authors in \cite{Rabbachin2015} explored some other PHY-security properties of the network, like secrecy rate density, secrecy rate outage density and secrecy throughput density. The dependence of the area spectral efficiency of Poisson networks on security and other parameters was studied in \cite{Zhou2011}.

In traditional cellular networks, base stations and mobile users  are usually modeled as independent and homogeneous PPPs. Recent efforts, such as \cite{HWang2013} and 
\cite{Geraci2014Cellular}, have been devoted to study the average secrecy rate achievable for a randomly located mobile user and the related probability of secrecy outage. For the cellular networks with D2D users, the authors in \cite{ChMa2015} modeled the locations of base stations, cellular users, D2D users and eavesdroppers as four independent and homogeneous PPPs, and studied the connection probabilities and secrecy probabilities of both the cellular and D2D links. 

It is notable that some recent works have also been reported on the study of PHY-security secure communications for other promising network scenarios, like cognitive networks \cite{Xu2016} and cognitive networks with D2D communications \cite{YWLiu2015ICC}.

\section{Conclusion}\label{sec_7}
This paper explored the physical layer security-based secure communications in a finite Poisson network with social friendships among nodes, for which a social friendship-based cooperative jamming scheme is proposed. The jamming scheme consists of a Local Friendship Circle (LFC) and a Long-range Friendship Annulus (LFA), where all legitimate nodes in the LFC serve as jammers, but the legitimate nodes in the LFA are selected as jammers through three location-based policies, namely, Policy $\mathrm{E}$, Policy $\mathrm{I}$ and Policy $\mathrm{D}$. To understand the security and reliability performances of the proposed jamming scheme, we analyzed its transmission outage probability (TOP) and secrecy outage probability (SOP) based on the Laplace transforms of the sum interference at any location in the network. The results in this paper indicated that, in general, Policy $\mathrm{I}$ outperforms Policy $\mathrm{D}$ in terms of the reliability performance, while Policy $\mathrm{D}$ can ensure a better security performance than Policy $\mathrm{I}$. Also, we can flexibly control the reliability and security performances of Policy $\mathrm{E}$ by varying its long-range jammer selection probability. Three other interesting observations can also be found from the results in this paper. The first one is that increasing the outer radius of the LFA beyond some threshold can improve both the reliability and security performances of the proposed jamming scheme. The second one is that as the network size tends to infinity, the transmission security can hardly be guaranteed, due to the fact that any eavesdropper located infinitely far away from the transmitter can still cause a non-zero SOP. This also gives rise to the last interesting observation, that is, even if the receiver is located infinitely far away from the transmitter , it can successfully receive the information with a non-zero probability in general.


%

\appendices
\section{Integral Identities} \label{sec_app1}
\begin{identity}\label{eqn_I1}
For $a,b\in\mathbb R$ and $a>|b|$, we have from \cite{Gradshteyn2000} and \cite{tanbourgi2012} 
\begin{IEEEeqnarray}{rCl}
\int_{0}^{\pi}\frac{\mathrm{d}\theta}{(a+b\cos \theta)^{n+1}}=\frac{\pi P_n(\frac{a}{\sqrt{a^2-b^2}})}{(a^2-b^2)^\frac{n+1}{2}},
\end{IEEEeqnarray}
where $P_n(\cdot)$ is the $n^{th}$-Legendre polynomial and $P_0(\cdot)=1$ .
\end{identity}

\begin{identity}\label{eqn_I2}
Let $a,b,c\in\mathbb R$ and $c>0$. Defining $Q=ct^2+bt+a$ and $\Delta=4ac-b^2$, we have from \cite{Gradshteyn2000} and \cite{tanbourgi2012} 
\begin{IEEEeqnarray}{rCl}
\int \frac{\mathrm{d}t}{\sqrt{Q}}
&=&\frac{1}{\sqrt{c}}\ln(2\sqrt{cQ}+2ct+b)~~~~~~~~[c>0]\nonumber\\
&=& \frac{1}{\sqrt{c}}\arcsinh \frac{2ct+b}{\sqrt{\Delta}}~~~~~[c>0,\Delta>0],
\end{IEEEeqnarray}
\end{identity}

\begin{identity}\label{eqn_I3}
For $m,n\in\mathbb Z$ and $Q=ct^2+bt+a$, we have from \cite{Gradshteyn2000} 
\begin{IEEEeqnarray}{rCl}
\int\frac{t^m}{\sqrt{Q^{2n+1}}}\mathrm{d}t&=&\frac{t^{m-1}}{(m-2n)c\sqrt{Q^{2n-1}}}\nonumber\\
&&-\frac{(2m-2n-1)b}{2(m-2n)c}\int\frac{t^{m-1}}{\sqrt{Q^{2n+1}}}\mathrm{d}t\nonumber\\
&&-\frac{(m-1)a}{(m-2n)c}\int\frac{t^{m-2}}{\sqrt{Q^{2n+1}}}\mathrm{d}t,
\end{IEEEeqnarray}
where $a,b,c\in\mathbb R$ and $c>0$.
\end{identity}

\section{Proof of Theorem \ref{theorem_LT_2}}\label{sec_app2}
For $\alpha = 2$, we can rewrite $B_\alpha$ in (\ref{eqn_B_alpha}) as
\begin{IEEEeqnarray}{rCl}
B_2&=&2\int_{0}^{R_1}\int_{0}^{\pi}\frac{sr\mathrm{d}\theta\mathrm{d}r}{s+r^2+||y||^2-2r||y||\cos\theta}.
\end{IEEEeqnarray}
Applying Identity \ref{eqn_I1} in Appendix \ref{sec_app1}, we have 
\begin{IEEEeqnarray}{rCl}
B_2&=&\pi s\int_{0}^{R_1}\frac{2r\mathrm{d}r}{\sqrt{r^4+2(s-||y||^2)r^2+(s+||y||^2)^2}}\nonumber\\
&\overset{(c)}=&\pi s \int_{0}^{R_1^2}\frac{\mathrm{d}t}{\sqrt{(t^2+2(s-||y||^2)t+(s+||y||^2)^2}},
\end{IEEEeqnarray}
where $(c)$ follows from substituting $r^2$ with $t$. We then apply Identity \ref{eqn_I2} in Appendix \ref{sec_app1} and substitute $t$ with $r^2$ to obtain
\begin{IEEEeqnarray}{rCl}\label{eqn_B2}
B_2&=&\pi s\left(\arcsinh\frac{s+R_1^2-||y||^2}{2||y||\sqrt{s}}-\ln\frac{\sqrt{s}}{||y||}\right).
\end{IEEEeqnarray}
Similarly, applying Identity \ref{eqn_I1}, we can rewrite $C_\alpha$ in (\ref{eqn_C_alpha}) as
\begin{IEEEeqnarray}{rCl}
C_2&=&\pi s\int_{R_1}^{R_2}\frac{2rP(r)\mathrm{d}r}{\sqrt{r^4+2(s-||y||^2)r^2+(s+||y||^2)^2}}.
\end{IEEEeqnarray}
For Policy $\mathrm{E}$, $P(r) = p$. Then,
\begin{IEEEeqnarray}{rCl}\label{eqn_C2_E}
C_2&=&p\pi s\arcsinh \frac{s+r^2-||y||^2}{2||y||\sqrt{s}}\bigg|_{r=R_1}^{R_2}.
\end{IEEEeqnarray}
Substituting (\ref{eqn_C2_E}) and (\ref{eqn_B2}) into (\ref{eqn_A}), and then substituting (\ref{eqn_A}) into (\ref{eqn_LT_general}) yields (\ref{eqn_LT_E2}).
$P(r)$ can be written as $P(r)=u+vr^2$, where $u = -\frac{R_1^2}{R_2^2-R_1^2}$, $v = \frac{1}{R_2^2-R_1^2}$ for Policy \textup{I}, and $u = \frac{R_2^2}{R_2^2-R_1^2}$, $v = -\frac{1}{R_2^2-R_1^2}$ for Policy \textup{D}. Hence, 
\begin{IEEEeqnarray}{rCl}
C_2&=&\pi s\int_{R_1}^{R_2}\frac{2r(u+vr^2)\mathrm{d}r}{\sqrt{r^4+2(s-||y||^2)r^2+(s+||y||^2)^2}}\nonumber\\
&=&\pi s\int_{R_1^2}^{R_2^2}\frac{(u+vt)\mathrm{d}t}{\sqrt{(t^2+2(s-||y||^2)t+(s+||y||^2)^2}}\nonumber\\
&=&\pi s\bigg[u\int_{R_1^2}^{R_2^2}\frac{\mathrm{d}t}{\sqrt{(t^2+2(s-||y||^2)t+(s+||y||^2)^2}}\nonumber\\
&&+v\int_{R_1^2}^{R_2^2}\frac{t\mathrm{d}t}{\sqrt{(t^2+2(s-||y||^2)t+(s+||y||^2)^2}}\bigg]\nonumber\\
&\overset{(d)}=&\pi s\bigg[(u-vs+v||y||^2)\arcsinh \frac{s+t-||y||^2}{2||y||\sqrt{s}}\nonumber\\
&&+v \sqrt{(t^2+2(s-||y||^2)t+(s+||y||^2)^2}\bigg]\bigg|_{t=R_1^2}^{R_2^2},
\end{IEEEeqnarray}
where $(d)$ follows from applying Identity \ref{eqn_I2} and Identity \ref{eqn_I3}. Substituting $t$ with $r^2$, we have 
\begin{IEEEeqnarray}{rCl}\label{eqn_C2_I_D}
C_2&=&\pi s\bigg[(u-vs+v||y||^2)\arcsinh \frac{s+r^2-||y||^2}{2||y||\sqrt{s}} \\
&&+v \sqrt{(r^4+2(s-||y||^2)r^2+(s+||y||^2)^2}\bigg]\bigg|_{r=R_1}^{R_2}.\nonumber
\end{IEEEeqnarray}

Finally, we substitute (\ref{eqn_B2}) and (\ref{eqn_C2_I_D}) with  into (\ref{eqn_A}), and then substitute (\ref{eqn_A}) into (\ref{eqn_LT_general}) to obtain (\ref{eqn_LT_I_D2}).

\section{Proof of Theorem \ref{theorem_LT_4}}\label{sec_app3}
For $\alpha=4$, we can rewrite $B_\alpha$ in (\ref{eqn_B_alpha}) as
\begin{IEEEeqnarray}{rCl}
B_4&=&2\int_{0}^{R_1}\int_{0}^{\pi}\frac{sr\mathrm{d}\theta\mathrm{d}r}{s+(r^2+||y||^2-2r||y||\cos\theta)^{2}}\nonumber\\
&=&2\int_{0}^{R_1}\frac{\sqrt{s}r}{2 i}\int_{0}^{\pi}\frac{\mathrm{d}\theta\mathrm{d}r}{(r^2+||y||^2-2r||y||\cos\theta-i\sqrt{s})}\nonumber\\
&&-\frac{\mathrm{d}\theta\mathrm{d}r}{(r^2+||y||^2-2r||y||\cos\theta+i\sqrt{s})}\nonumber\\
&\overset{(e)}=&\frac{\pi\sqrt{s}}{2i}\int_{0}^{R_1}\frac{2r\mathrm{d}r}{\sqrt{\mathcal C_1}}-\frac{2r\mathrm{d}r}{\sqrt{\mathcal C_2}}\nonumber\\
&\overset{(f)}=&\frac{\pi\sqrt{s}}{2i}\ln \frac{\sqrt{\mathcal C_1}+r^2-(i\sqrt{s}+||y||^2)}{\sqrt{\mathcal C_2}+r^2+(i\sqrt{s}-||y||^2)}\bigg|_{r=0}^{R_1},
\end{IEEEeqnarray}
where $(e)$ follows from applying Identity \ref{eqn_I1}, $(f)$ follows from applying Identity \ref{eqn_I2},
\begin{IEEEeqnarray}{rCl}\label{eqn_C1}
\mathcal C_1&=&(r^2-||y||^2)^2-s-2i\sqrt{s}(r^2+||y||^2)\nonumber,
\end{IEEEeqnarray}
and $\mathcal C_2=\mathcal C_1^*$ is the complex conjugate of $\mathcal C_1$. 
Now, we rewrite $\mathcal C_1$ as 
\begin{IEEEeqnarray}{rCl}
\mathcal C_1=(\eta -i\psi )^2=\eta^2-\psi^2-2i\eta\psi,
\end{IEEEeqnarray}  
for some real-valued functions $\eta(r,s,||y||)$ and $\psi(r,s,||y||)$. We can then establish the following equation system
\begin{IEEEeqnarray}{rCl}
\left\{\begin{matrix}
\eta^2-\psi ^2&=&(r^2-||y||^2)^2-s\\ 
\eta \psi&=&\sqrt{s}(r^2+||y||^2).
\end{matrix}\right.
\end{IEEEeqnarray} 
(\ref{eqn_eta}) and (\ref{eqn_psi}) then follow from solving the above equation system. For simplicity of notation, we also use $\eta$ and $\psi$ to represent $\eta(r,s,||y||)$ and $\psi(r,s,||y||)$, respectively. Given $\mathcal C_1$ as in (\ref{eqn_C1}), 
\begin{IEEEeqnarray}{rCl}\label{eqn_B4}
B_4&=&\frac{\pi\sqrt{s}}{2i}\ln \frac{\eta+r^2-||y||^2-i(\sqrt{s}+\psi)}{\eta+r^2-||y||^2+i(\sqrt{s}+\psi)}\bigg|_{r=0}^{R_1}\\
&=&\frac{\pi\sqrt{s}}{2i}\ln \frac{1-i\frac{\sqrt{s}+\psi}{\eta+r^2-||y||^2}}{1+i\frac{\sqrt{s}+\psi}{\eta+r^2-||y||^2}}\bigg|_{r=0}^{R_1}\nonumber\\
&=&-\pi\sqrt{s}\arctan\frac{\sqrt{s}+\psi}{\eta+r^2-||y||^2}\bigg|_{r=0}^{R_1}\nonumber\\
&\overset{(g)}=&\pi\sqrt{s}\left(\frac{\pi}{2}-\arctan\frac{\sqrt{s}+\psi(R_1,s,||y||)}{\eta(R_1,s,||y||)+R_1^2-||y||^2}\right),\nonumber
\end{IEEEeqnarray}
where $(g)$ follows from 
\begin{IEEEeqnarray}{rCl}
&&\lim_{r\rightarrow 0}\arctan\frac{\sqrt{s}+\psi(r,s,||y||)}{\eta(r,s,||y||)+r^2-||y||^2}\nonumber\\
&&=\lim_{r\rightarrow 0}\arctan\frac{\sqrt{s}+\sqrt{2s}}{||y||^2+r^2-||y||^2}\nonumber\\
&&=\arctan \infty=\frac{\pi}{2}.
\end{IEEEeqnarray}
Similarly, applying Identity \ref{eqn_I1}, we can rewrite $C_\alpha$ in (\ref{eqn_C_alpha}) as 
\begin{IEEEeqnarray}{rCl}
C_4&=&\frac{\pi\sqrt{s}}{2i}\int_{R_1}^{R_2}\frac{2rP(r)\mathrm{d}r}{\sqrt{\mathcal C_1}}-\frac{2rP(r)\mathrm{d}r}{\sqrt{\mathcal C_2}},
\end{IEEEeqnarray}
For Policy $\mathrm{E}$, $P(r)=p\in[0,1]$. Then, 
\begin{IEEEeqnarray}{rCl}\label{eqn_C4_E}
C_4&=&-p\pi\sqrt{s}\arctan\frac{\sqrt{s}+\psi(r,s,||y||)}{\eta(r,s,||y||)+r^2-||y||^2}\bigg|_{r=R_1}^{R_2}.
\end{IEEEeqnarray} 
Substituting (\ref{eqn_C4_E}) and (\ref{eqn_B4}) into (\ref{eqn_A}) and then substituting (\ref{eqn_A}) into (\ref{eqn_LT_general}) yields (\ref{eqn_LT_E4}).
$P(r)$ can be written as $P(r)=u+vr^4$, where $u = -\frac{R_1^4}{R_2^4-R_1^4}$, $v = \frac{1}{R_2^4-R_1^4}$ for Policy \textup{I}, and $u = \frac{R_2^4}{R_2^4-R_1^4}$, $v = -\frac{1}{R_2^4-R_1^4}$ for Policy \textup{D} . Hence,
\begin{IEEEeqnarray}{rCl}\label{eqn_C4_I_D}
C_4&=&\frac{\pi\sqrt{s}}{2i}\int_{R_1}^{R_2}\frac{2r(u+vr^4)\mathrm{d}r}{\sqrt{\mathcal C_1}}-\frac{2r(u+vr^4)\mathrm{d}r}{\sqrt{\mathcal C_2}}\mathrm{d}r\nonumber\\
&\overset{(h)}=&\frac{\pi\sqrt{s}}{2i}\int_{R_1}^{R_2}\frac{(u+vt^2)\mathrm{d}t}{\sqrt{t^2-2(i\sqrt{s}+||y||^2)t+(||y||^2-i\sqrt{s})^2}}\nonumber\\
&&-\frac{(u+vt^2)\mathrm{d}t}{\sqrt{t^2+2(i\sqrt{s}-||y||^2)t+(||y||^2+i\sqrt{s})^2}},
\end{IEEEeqnarray}
where $(h)$ follows from substituting $r^2$ with $t$. Next, we have
\begin{IEEEeqnarray}{rCl}\label{eqn_C4_I_D_1}
&&\int\frac{(u+vt^2)\mathrm{d}t}{\sqrt{t^2-2(i\sqrt{s}+||y||^2)t+(||y||^2-i\sqrt{s})^2}}\nonumber\\
&&=u\int\frac{\mathrm{d}t}{\sqrt{t^2-2(i\sqrt{s}+||y||^2)t+(||y||^2-i\sqrt{s})^2}}\nonumber\\
&&~~+v\int\frac{t^2\mathrm{d}t}{\sqrt{t^2-2(i\sqrt{s}+||y||^2)t+(||y||^2-i\sqrt{s})^2}}\nonumber\\
&&\overset{(i)}=\frac{v}{2}(r^2+3||y||^2+3i\sqrt{s})(\eta-i\psi)\nonumber\\
&&~~~+(u+v||y||^4-vs+i4v\sqrt{s}||y||^2)\nonumber\\
&&~~~\times\ln \left[\sqrt{\mathcal C_1}+r^2-(i\sqrt{s}+||y||^2)\right],
\end{IEEEeqnarray}
where $(i)$ follows from applying Identity \ref{eqn_I3} in Appendix \ref{sec_app1} and substituting $t$ with $r^2$. Similarly, we have
\begin{IEEEeqnarray}{rCl}\label{eqn_C4_I_D_2}
&&\int\frac{(u+vt^2)\mathrm{d}t}{\sqrt{t^2+2(i\sqrt{s}-||y||^2)t+(||y||^2+i\sqrt{s})^2}}\nonumber\\
&&=\frac{v}{2}(r^2+3||y||^2-3i\sqrt{s})(\eta+i\psi)\nonumber\\
&&~~~+(u+v||y||^4-vs-i4v\sqrt{s}||y||^2)\nonumber\\
&&~~~\times\ln \left[\sqrt{\mathcal C_2}+r^2-(i\sqrt{s}+||y||^2)\right].
\end{IEEEeqnarray}

Thus, substituting (\ref{eqn_C4_I_D_1}) and (\ref{eqn_C4_I_D_2}) into (\ref{eqn_C4_I_D}) and then conducting some algebraic manipulations yields
\begin{IEEEeqnarray}{rCl}\label{eqn_C4_I_D_final}
C_4&=&2\pi vs||y||^2\ln\bigg[(\eta(r,s,||y||)+r^2-||y||^2)^2\nonumber\\
&&+(\sqrt{s}+\psi(r,s,||y||))^2\bigg]-\pi\sqrt{s}\Bigg\{\frac{v}{2}\bigg[(r^2+3||y||^2)\nonumber\\
&&\times\psi(r,s,||y||)-3\sqrt{s}\eta(r,s,||y||)\bigg]+(u+v||y||^4-vs)\nonumber\\
&&\times \arctan\frac{\sqrt{s}+\psi(r,s,||y||)}{\eta(r,s,||y||)+r^2-||y||^2}\Bigg\}\Bigg|_{r=R_1}^{R_2}.
\end{IEEEeqnarray} 

Finally, we substitute (\ref{eqn_B4}) and (\ref{eqn_C4_I_D_final}) into (\ref{eqn_A}), and then substitute (\ref{eqn_A}) into (\ref{eqn_LT_general}) to obtain (\ref{eqn_LT_I_D4}).

\section{Probability Density Function of $R_{z^*}$}\label{sec_app4}
The CCDF $\bar{F}_{R_{z^*}}(r_{e^*})$ of the random distance $R_{z^*}$ equals the probability that no eavesdroppers are in $\mathcal B(o,r_{e^*})$ for $0 \leq r_{e^*}\leq D$.
Hence, the CDF of $R_{z^*}$ is given by
\begin{IEEEeqnarray}{rCl}
F_{R_{z^*}}(r_{e^*})&=&1-\mathbb P\left(\Phi_E(\mathcal B(o,r_{e^*}))=0\right)\nonumber\\
&=&1-\sum_{n=0}^{\infty}\mathbb P\left(\Phi_E(\mathcal B(o,r_{e^*}))=0\big|\Phi_E(\mathcal B(o,D))=n\right)\nonumber\\
&&\times\mathbb P(\Phi_E(\mathcal B(o,D))=n)\nonumber\\
&=&1-\sum_{n=0}^{\infty}\left(1-\frac{r_{e^*}^2}{D^2}\right)^n\frac{(\lambda_e\pi D^2)^n\mathrm{exp}(-\lambda_e \pi D^2)}{n!}\nonumber\\
&=&1-\mathrm{exp}(-\lambda_e \pi D^2)\sum_{n=0}^{\infty}\left(1-\frac{r_{e^*}^2}{D^2}\right)^n\frac{(\lambda_e\pi D^2)^n}{n!}\nonumber\\
&=&1-\mathrm{exp}(-\lambda_e \pi D^2)\mathrm{exp}\left[\left(1-\frac{r_{e^*}^2}{D^2}\right)\lambda_e\pi D^2\right]\nonumber\\
&=&1-\mathrm{exp}(-\lambda_e \pi r_{e^*}^2),
\end{IEEEeqnarray}
for $0 \leq r_{e^*}\leq D$. Therefore, the pdf of $R_{z^*}$ is given by
\begin{IEEEeqnarray}{rCl}
f_{R_{z^*}}(r_{e^*})=
\left\{\begin{matrix}
 2\lambda_e\pi r_{e^*}\mathrm{exp}(-\lambda_e \pi r_{e^*}^2), & 0\leq r_{e^*}\leq D\nonumber\\ 
 0,& \mathrm{otherwise} 
\end{matrix}\right..
\end{IEEEeqnarray}

\section{Proof of Corollary \ref{corollary2}}\label{sec_app5}
Consider an annulus with inner radius $D-\epsilon$ and outer radius $D$, where $\epsilon>0$ is a constant. The basic idea is to first prove that as $D\rightarrow \infty$, the probability of secrecy outage caused by any eavesdropper $z$ in the annulus is above a constant, and then prove that the expected number of eavesdroppers in the annulus tends to infinity, as $D\rightarrow \infty$. 

We first prove the former part.  For any eavesdropper $z$ in the annulus, we can find a constant $\epsilon'$ ($0<\epsilon'<\epsilon$), such that $||z||=D-\epsilon'$. The probability of secrecy outage caused by the eavesdropper $z$ is then bounded from below by that derived for the case where all legitimate nodes in $\mathcal A_1\bigcup \mathcal A_2$ serve as jammers, i.e.,
\begin{IEEEeqnarray}{rCl}
&&\mathbb P(\mathrm{SIR}_{z}\geq \beta_e)\\
&&=\mathcal L_{I(z)}^{\Xi,\alpha}(\beta_e(D-\epsilon')^\alpha)\nonumber\\
&&\geq\mathrm{exp}\Bigg\{-\lambda\int_{0}^{R_2}\int_{0}^{\pi}\nonumber\\
&&\frac{2\beta_e(D-\epsilon')^\alpha r\mathrm{d}\theta\mathrm{d}r}{\beta_e(D-\epsilon')^\alpha+(r^2+(D-\epsilon')^2-2r(D-\epsilon')\cos\theta)^{\frac{\alpha}{2}}}\Bigg\}.\nonumber
\end{IEEEeqnarray}
As $D\rightarrow \infty$, 
\begin{IEEEeqnarray}{rCl}\label{eqn_LT_infty}
&&\lim_{D\rightarrow \infty}\mathbb P(\mathrm{SIR}_{z}\geq \beta_e)\nonumber\\
&&\geq\lim_{D\rightarrow \infty}\mathrm{exp}\Bigg\{-\lambda\int_{0}^{R_2}\int_{0}^{\pi}\nonumber\\
&&\frac{2\beta_e(D-\epsilon')^\alpha r\mathrm{d}\theta\mathrm{d}r}{\beta_e(D-\epsilon')^\alpha+(r^2+(D-\epsilon')^2-2r(D-\epsilon')\cos\theta)^{\frac{\alpha}{2}}}\Bigg\}\nonumber\\
&&=\mathrm{exp}\Bigg\{-\lambda\lim_{D\rightarrow \infty}\int_{0}^{R_2}\int_{0}^{\pi}\nonumber\\
&&\frac{2\beta_e(D-\epsilon')^\alpha r\mathrm{d}\theta\mathrm{d}r}{\beta_e(D-\epsilon')^\alpha+(r^2+(D-\epsilon')^2-2r(D-\epsilon')\cos\theta)^{\frac{\alpha}{2}}}\Bigg\}\nonumber\\
&&=\mathrm{exp}\Bigg\{-\lambda\lim_{D\rightarrow \infty}\int_{0}^{R_2}\int_{0}^{\pi}\nonumber\\
&&\frac{2\beta_e r\mathrm{d}\theta\mathrm{d}r}{\beta_e+\left[\frac{r^2}{(D-\epsilon')^2}+1-\frac{2r}{(D-\epsilon')}\cos\theta\right]^{\frac{\alpha}{2}}}\Bigg\}\nonumber\\
&&=\mathrm{exp}\left(-\lambda\int_{0}^{R_2}\int_{0}^{\pi}\frac{2\beta_e r\mathrm{d}\theta\mathrm{d}r}{\beta_e+1}\right)\nonumber\\
&&=\mathrm{exp}\left(-\lambda \pi R_2^2 \frac{\beta_e}{\beta_e+1}\right).
\end{IEEEeqnarray}

We now prove the latter part. According to the property of homogeneous PPP, the expected number of eavesdroppers in this annulus is 
\begin{IEEEeqnarray}{rCl}
\lambda_e\pi(D^2-(D-\epsilon)^2)=\lambda_e\pi\epsilon(2D-\epsilon).
\end{IEEEeqnarray}
It is easy to see that $\lim_{D\rightarrow\infty}\lambda_e\pi\epsilon(2D-\epsilon)=\infty$. The probability of secrecy outage caused by the eavesdroppers in the annulus can then be approximated by 
\begin{IEEEeqnarray}{rCl}
1-\left(1-\mathrm{exp}\left(-\lambda \pi R_2^2 \frac{\beta_e}{\beta_e+1}\right)\right)^\infty=1,
\end{IEEEeqnarray}
which completes the proof.

%
%
%
%



%

\bibliographystyle{IEEEtran}
\bibliography{myIEEEref}

%
\begin{IEEEbiography}[]{Yuanyu Zhang}
received his B.S. degree in Software Engineering from Xidian University in 2011 and M.S. degrees in Computer Science from Xidian University in 2014. He is currently working towards a Ph.D. degree at the School of Systems Information Science at Future University Hakodate. His research interests include the physical layer security of wireless communications, and performance modeling and evaluation of wireless networks.
\end{IEEEbiography}

\begin{IEEEbiography}[]{Yulong Shen}
received the B.S. and M.S. degrees in Computer Science and Ph.D. degree in Cryptography from Xidian University, Xian, China, in 2002, 2005, and 2008, respectively. He is currently a Professor at the School of Computer Science and Technology, Xidian University, China. He is also an associate director of the Shaanxi Key Laboratory of Network and System Security and a member of the State Key Laboratory of Integrated Services networks Xidian University, China. He has also served on the technical program committees of several international conferences, including ICEBE, INCoS, CIS and SOWN. His research interests include Wireless network security and cloud computing security.
\end{IEEEbiography}

\begin{IEEEbiography}[]{Hua Wang}
received his PhD degree from the University of Southern Queensland, Australia. He is now a full time Professor at Victoria University. He was a professor at the University of Southern Queensland before he joined Victoria University.  Hua has more than ten years teaching and working experience in Applied Informatics at both enterprise and university.  He has expertise in electronic commerce, business process modeling and enterprise architecture.  As an Chief Investigator, three Australian Research Council (ARC) Discovery grants have been awarded since 2006, and 155 peer reviewed scholar papers have been published. Six PhD students have already graduated under his principal supervision.
\end{IEEEbiography}
%

\begin{IEEEbiography}[]{Xiaohong Jiang}
Dr.Xiaohong Jiang received his B.S., M.S. and Ph.D degrees in 1989, 1992, and 1999 respectively, all from Xidian University, China. He is currently a full professor of Future University Hakodate, Japan. Before joining Future University, Dr.Jiang was an Associate professor, Tohoku University, from Feb.2005 to Mar.2010. Dr. Jiang’s research interests include computer communications networks, mainly wireless networks and optical networks, network security, routers/switches design, etc. He has published over 260 technical papers at premium international journals and conferences, which include over 50 papers published in top IEEE journals and top IEEE conferences, like IEEE/ACM Transactions on Networking, IEEE Journal of Selected Areas on Communications, IEEE Transactions on Parallel and Distributed Systems, IEEE INFOCOM.  Dr. Jiang was the winner of the Best Paper Award of IEEE HPCC 2014, IEEE WCNC 2012, IEEE WCNC 2008, IEEE ICC 2005-Optical Networking Symposium, and IEEE/IEICE HPSR 2002. He is a Senior Member of IEEE, a Member of ACM and IEICE.
\end{IEEEbiography}
%
%
%




\end{document}